\relax
\documentclass[letterpaper]{article} 
\usepackage{aaai21}  
\usepackage{times}  
\usepackage{helvet} 
\usepackage{courier}  
\usepackage[hyphens]{url}  
\usepackage{graphicx} 
\urlstyle{rm} 
\usepackage{natbib}  
\usepackage{caption} 
\frenchspacing  
\setlength{\pdfpagewidth}{8.5in}  
\setlength{\pdfpageheight}{11in}  
\pdfinfo{
/Title (Deep Portfolio Optimization via Distributional Prediction of Residual Factors)
/Author (Kentaro Imajo, Kentaro Minami, Katsuya Ito, Kei Nakagawa)
/TemplateVersion (2021.1)
} 

\setcounter{secnumdepth}{2} 

%



\title{Deep Portfolio Optimization
via Distributional Prediction of Residual Factors}

\author {
    Kentaro Imajo, \textsuperscript{\rm 1}
    Kentaro Minami, \textsuperscript{\rm 1}
    Katsuya Ito, \textsuperscript{\rm 1}
    Kei Nakagawa, \textsuperscript{\rm 2}
    \\
}
\affiliations {
    \textsuperscript{\rm 1} Preferred Networks, Inc. \\
    \textsuperscript{\rm 2} Nomura Asset Management Co., Ltd. \\
    \{imos, minami, katsuya1ito\}@preferred.jp,
    k-nakagawa@nomura-am.co.jp
}

\usepackage[switch]{lineno} 

\usepackage{bm}
\usepackage{booktabs}

\usepackage{tikz}
\usetikzlibrary{positioning}

\usepackage{enumitem}

\usepackage{amsthm}
\theoremstyle{definition}
\newtheorem{definition}{Definition}
\newtheorem{rmk}{Remark}
\newtheorem{prop}{Proposition}
\newtheorem{lem}{Lemma}

\usepackage{latexsym}
\usepackage{amsmath}
\usepackage{amssymb}
\usepackage{color}

\newcommand{\argmin}{\operatornamewithlimits{argmin}}

\newcommand{\EE}{\mathbb{E}}

\newcommand{\RR}{\mathbb{R}}

\newcommand{\mcH}{\mathcal{H}}

\newcommand{\mcL}{\mathcal{L}}

\newcommand{\mcV}{\mathcal{V}}

\newcommand{\norm}[1]{\lVert #1 \rVert}
\newcommand{\Var}{\mathrm{Var}}

\newcommand{\set}[1]{\{#1\}}

\newcommand{\diag}{\mathord{\mathrm{diag}}}

\newcommand{\com}[1]{#1}

\begin{document}
\maketitle

\begin{abstract}
Recent developments in deep learning techniques have motivated intensive research in machine learning-aided stock trading strategies.
However, since the financial market has a highly non-stationary nature hindering the application of typical data-hungry machine learning methods, leveraging financial inductive biases is important to ensure better sample efficiency and robustness.
In this study, we propose a novel method of constructing a portfolio based on predicting the distribution of a financial quantity called residual factors, which is known to be generally useful for hedging the risk exposure to common market factors.
The key technical ingredients are twofold.
First, we introduce a computationally efficient extraction method for the residual information, which can be easily combined with various prediction algorithms.
Second, we propose a novel neural network architecture that allows us to incorporate widely acknowledged financial inductive biases such as amplitude invariance and time-scale invariance.
We demonstrate the efficacy of our method on U.S.~and Japanese stock market data. 
Through ablation experiments, we also verify that each individual technique contributes to improving the performance of trading strategies.
We anticipate our techniques may have wide applications in various financial problems.
\end{abstract}

\section{Introduction}\label{sec-introduction}

Developing a profitable trading strategy is a central problem in the financial industry. 
Over the past decade, machine learning and deep learning techniques have driven significant advances across many application areas \cite{Devlin2019BERT,Graves2013RNN}, which inspired investors and financial institutions to develop machine learning-aided trading strategies \cite{wang2019alphastock,choudhry2008hybrid,shah2007machine}. However, it is believed that forecasting the nature of financial time series is essentially a difficult task \cite{krauss2017deep}. 
In particular, the well-known \textit{efficient market hypothesis} \cite{malkiel1970efficient} claims that no single trading strategy could be permanently profitable because the dynamics of the market changes quickly. 
At this point, the financial market is significantly different from stationary environments typically assumed by most machine learning/deep learning methods.

Generally speaking, a good way for deep learning methods to adapt quickly to the given environment is to introduce a network architecture that reflects a good inductive bias for the environment. 
The most prominent examples for such architectures include convolutional neural networks (CNNs) \cite{Krizhevsky2012CNN} for image data and long short-term memories (LSTMs) \cite{Hochreiter1997LSTM} for general time series data. 
Therefore, a natural question to ask is what architecture is effective for processing financial time series.

In finance, researchers have proposed various trading strategies and empirically studied their effectiveness. Hence, it is reasonable to seek architectures inspired by empirical findings in financial studies. In particular, we consider the following three features in the stock market.

\subsection{Hedging exposures to common market factors}
Many empirical studies on stock returns are described in terms of \textit{factor models} (e.g., \cite{fama1992, fama2015}).
These factor models express the return of a certain stock $i$ at time $t$ as a linear combination of $K$ factors plus a residual term:
\begin{equation}
    r_{i, t} = \sum_{k=1}^K \beta^{(k)}_i f^{(k)}_t + \epsilon_{i, t}.
    \label{eq:factor_intro}
\end{equation}
Here, $f^{(1)}_t, \ldots, f^{(K)}_t$ are the common factors shared by multiple stocks $i \in \{ 1, \ldots, S \}$, and the residual factor $\epsilon_{i, t}$ is specific to each individual stock $i$.
Therefore, the common factors correspond to the dynamics of the entire stock market or industries, whereas the residual factors convey some firm-specific information.

In general, if the return of an asset has a strong correlation to the market factors, the asset exhibits a large exposure to the risk of the market. For example, it is known that a classical strategy based on the \textit{momentum} phenomenon \cite{jegadeesh1993moment} is correlated to the Fama--French factors \cite{fama1992, fama2015}, which exhibited negative returns around the credit crisis of 2008 \cite{calomiris2010crisis,szado2009vix}. On the other hand, researchers found that trading strategies based only on the residual factors can be robustly profitable because such strategies can hedge out the time-varying risk exposure to the market factor \cite{blitz2011residual,blitz2013short}. Therefore, to develop trading strategies that are robust to structural changes in the market, it is reasonable to consider strategies based on residual factors.

A natural way to remove the market effect and extract the residual factors is to leverage linear decomposition methods such as principal component analysis (PCA) and factor analysis (FA). In the context of training deep learning-based strategies, it is expected to be difficult to learn such decomposition structures only from the observed data. One possible reason is that the learned strategies can be biased toward using information about market factors because the effect of market factors are dominant in many stocks \cite{pasini2017principal}. Hence, in order to utilize the residual factors, it is reasonable to implement a decomposition-like structure explicitly within the architecture.

\subsection{Designing architectures for scale invariant time series}
When we address a certain prediction task using a neural network-based approach, an effective choice of neural network architecture typically hinges on \textit{patterns} or \textit{invariances} in the data. For example, CNNs \cite{LeCun1999CNN} take into account the shift-invariant structure that commonly appears in image-like data. From this perspective, it is important to find invariant structures that are useful for processing financial time series.

As candidates of such structures, there are two types of invariances known in financial literature. First, it is known that a phenomenon called volatility clustering is commonly observed in financial time series \cite{lux2000volatility}, which suggests an invariance structure of a sequence with respect to its volatility (i.e., amplitude). Second, there is a hypothesis that sequences of stock prices have a certain time-scale invariance property known as the \textit{fractal structure} \cite{peters1994fractal}. We hypothesize that incorporating such invariances into the network architecture is effective at accelerating learning from financial time series data.

\subsection{Constructing portfolios via distributional prediction}

Another important problem is how to convert a given prediction of the returns into an actual trading strategy.
In finance, there are several well-known trading strategies. To name a few, the momentum phenomenon \cite{jegadeesh1993moment} suggests a strategy that bets the current market trend, while the mean reversion \cite{poterba1988reversion} suggests another strategy that assumes that the stock returns moves toward the opposite side of the current direction. However, as suggested by the construction, the momentum and the reversal strategies are negatively correlated to each other, and it is generally unclear which strategy is effective for a particular market.
On the other hand, modern portfolio theory \cite{markowitz1952portfolio} provides a framework to determine a portfolio from distributional properties of asset prices (typically means and variances of returns). The resulting portfolio is unique in the sense that it has an optimal trade-off of returns and risks under some predefined conditions. From this perspective, distributional prediction of returns can be useful to construct trading strategies that can automatically adapt to the market.

\subsection{Summary of contributions}

\begin{figure*}[t]
  \centering
  \includegraphics[width=0.85\linewidth]{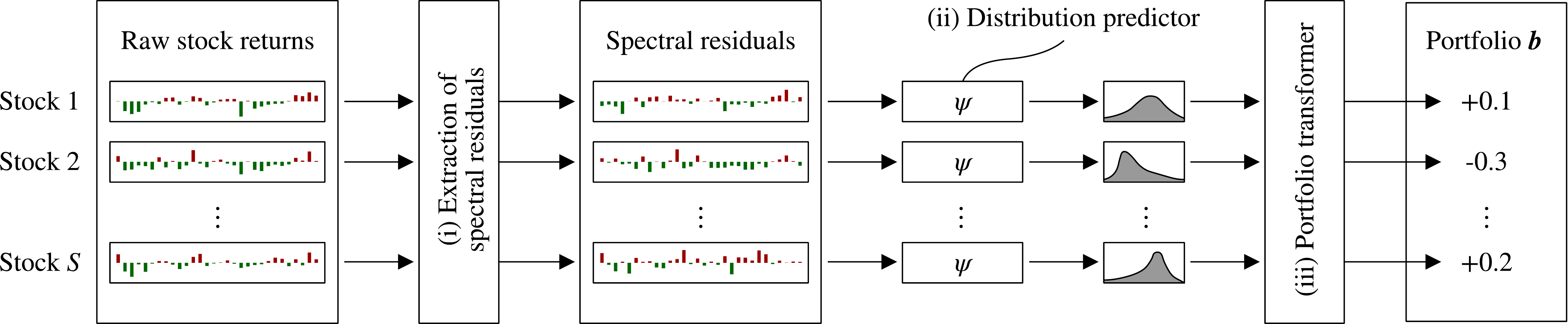}
  \caption{Overview of the proposed system.
  Our system consists of three parts: (i) the extraction layer of residual factors (ii) a neural network-based distribution predictor and (iii) transformation to the optimal portfolio.
  }
  \label{fig-overview}
\end{figure*}

\begin{itemize}
    \item We propose a novel method to extract residual information, which we call the \textit{spectral residuals}. The spectral residuals can be calculated much faster than the classical factor analysis-based method without losing the ability to hedge out exposure to the market factors.
    Moreover, the spectral residuals can easily be combined with any prediction algorithms.
    \item We propose a new system for distributional prediction of stock prices based on deep neural networks. Our system involves two novel neural network architectures inspired by well-known invariance hypotheses on financial time series. Predicting the distributional information of returns allows us to utilize the optimal portfolio criteria offered by modern portfolio theory.
    \item We demonstrate the effectiveness of our proposed methods on real market data.
\end{itemize}

In the supplementary material, we also include appendices which contain detailed mathematical formulations and experimental settings, theoretical analysis, and additional experiments. 

\section{Preliminaries}\label{sec-preliminary}

\subsection{Problem setting}
Our problem is to construct a time-dependent portfolio based on sequential observations of stock prices. Suppose that there are $S$ stocks indexed by symbol $i$. The observations are given as a discrete time series of stock prices
$
    \bm{p}^{(i)} = (
    p^{(i)}_1,
    p^{(i)}_2,
    \ldots,
    p^{(i)}_t,
    \ldots
    ).
$
Here, $p^{(i)}_t$ is the price of stock $i$ at time $t$. We mainly consider the \textit{return} of stocks instead of their raw prices. The return of stock $i$ at time $t$ is defined as
$
    r^{(i)}_t = p^{(i)}_{t+1} / p^{(i)}_{t} - 1.
$

A \textit{portfolio} is a (time-dependent) vector of weights over the stocks
$
    \bm{b}_t = (b^{(1)}_t, \ldots, b^{(i)}_t, \ldots, b^{(S)}_t),
$
where $b^{(i)}_t$ is the volume of the investment on stock $i$ at time $t$ satisfying $\sum^S_{i=1} |b^{(i)}_t|=1$. A portfolio $\bm{b}_t$ is understood as a particular trading strategy, that is, $b^{(i)}_t > 0$ implies that the investor takes a long position on stock $i$ with amount $|b^{(i)}_t|$ at time $t$, and $b^{(i)}_t < 0$ means a short position on the stock. Given a portfolio $\bm{b}_t$, its overall return $R_t$ at time $t$ is given as
$
    R_t := \sum^S_{i=1} b^{(i)}_t r^{(i)}_t.
$
Then, given the past observations of individual stock returns, our task is to determine the value of $\bm{b}_t$ that optimizes the future returns.

An important class of portfolios is the \textit{zero-investment portfolio} defined as follows.
\begin{definition}[Zero-Investment Portfolio]
A zero-investment portfolio is a portfolio whose buying position and selling position are evenly balanced, i.e., $\sum^S_{i=1} b^{(i)}_t = 0$.
\end{definition}

In this paper, we restrict our attention to trading strategies that output zero-investment portfolio.
This assumption is sensible because a zero-investment portfolio requires no equity and thus encourages a fair comparison between different strategies. 

In practice, there can be delays between the observations of the returns and the actual execution of the trading.
To account for this delay, we also adopt the delay parameter $d$ in our experiments. When we trade with a $d$-day delay, the overall return should be $R_t := R_t^d = \sum^S_{i=1} b^{(i)}_t r^{(i)}_{t+d}$.

\subsection{Concepts of portfolio optimization}
\label{subsection-portfolio-theory}

According to modern portfolio theory \cite{markowitz1952portfolio}, investors construct portfolios to maximize expected return under a specified level of acceptable risk.
The standard deviation is commonly used to quantify the risk or variability of investment outcomes, which measures the degree to which a stock's annual return deviates from its long-term historical average \cite{kintzel2007portfolio}.

The Sharpe ratio \cite{sharpe1994sharpe} is one of the most referenced risk/return measures in finance.
It is the average return earned in excess of the risk-free rate per unit of volatility. The Sharpe ratio is calculated as $(R_\textrm{p} - R_\textrm{f})/\sigma_\textrm{p}$, where $R_\textrm{p}$ is the return of portfolio,
$\sigma_\textrm{p}$ is the standard deviation of the portfolio's excess return, and
$R_\textrm{f}$ is the return of a risk-free asset (e.g., a government bond).
For a zero-investment portfolio, we can always omit $R_\textrm{f}$ since it requires no equity \cite{mitra2009optimal}.

In this paper, we adopt the Sharpe ratio as the objective for our portfolio construction problem.
Since we cannot always obtain an estimate of the total risk beforehand, we often consider sequential maximization of the Sharpe ratio of the next period.
Once we predict the mean vector and the covariance matrix of the population of future returns, the optimal portfolio $\bm{b}^*$ can be solved as
$\bm{b}^* = \lambda^{-1} \bm{\Sigma}^{-1} \bm{\mu}$, where $\lambda$ is a predefined parameter representing the relative risk aversion, $\bm{\Sigma}$ is the estimated covariance matrix, and $\bm{\mu}$ is the estimated mean vector \cite{kan2007optimal}.
\footnote{Note that $\bm{b}^*$ is derived as the maximizer of $\bm{b}^\top \bm{\mu} - \frac{\lambda}{2} \bm{b}^\top \bm{\Sigma} \bm{b}$, where $\bm{b}^\top \bm{\mu}$ and $\bm{b}^\top \bm{\Sigma} \bm{b}$ are the return and the risk of the portfolio $\bm{b}$, respectively.}
Therefore, predicting the mean and the covariance is essential to construct risk averse portfolios.

\section{Proposed System}\label{sec-proposed}
In this section, we present the details of our proposed system, which is outlined in Figure \ref{fig-overview}.
Our system consists of three parts.
In the first part (i), the system extracts residual information to hedge out the effects of common market factors. To this end, in Section \ref{subsection-extraction-residual-factors}, we introduce the \textit{spectral residual}, a novel method based on spectral decomposition.
In the second part (ii), the system predicts future distributions of the spectral residuals using a neural network-based predictor. In the third part (iii), the predicted distributional information is leveraged for constructing optimal portfolios. We will outline these procedures in Section \ref{subsection-distributional-prediction}.
Additionally, we introduce a novel network architecture that incorporates well-known financial inductive biases, which we explain in Section \ref{subsection-architecture}.

\subsection{Extracting residual factors}
\label{subsection-extraction-residual-factors}

As mentioned in the introduction, we focus on developing trading strategy based on the residual factors, i.e., the information remaining after hedging out the common market factors.
Here, we introduce a novel method to extract the residual information, which we call the \textit{spectral residual}.

\subsubsection{Definition of the spectral residuals}
First, we introduce some notions from portfolio theory. Let $\bm{r}$ be a random vector with zero mean and covariance $\bm{\Sigma} \in \RR^{S \times S}$, which represents the returns of $S$ stocks over the given investment horizon. Since $\bm{\Sigma}$ is symmetric,
we have a decomposition $\bm{\Sigma} = \bm{V} \bm{\Lambda} \bm{V}^\top$, where $\bm{V} = [\bm{v}_1, \ldots, \bm{v}_S]$ is an orthogonal matrix and $\bm{\Lambda} = \diag(\lambda_1, \ldots, \lambda_S)$ is a diagonal matrix of the eigenvalues. Then, we can create a new random vector as $\hat{\bm{r}} = \bm{V}^\top \bm{r}$ such that the coordinate variables $\hat{r}_i = \bm{v}_i^\top \bm{r}$ are mutually uncorrelated. 
In portfolio theory, $\hat{r}_i$s are called \textit{principal portfolios} \cite{partovi2004principal}. Principal portfolios have been utilized in ``risk parity'' approaches to diversify the exposures to the intrinsic source of risk in the market \cite{meucci2009diversification}.

Since the volatility (i.e., the standard deviation) of the $i$-th principal portfolio is given as $\sqrt{\lambda_i}$, the raw return sequence has large exposure to the principal portfolios with large eigenvalues. For example, the first principal portfolio can be seen as the factor that corresponds to the overall market \cite{meucci2009diversification}. Therefore, to hedge out common market factors, a natural idea is to discard several principal portfolios with largest eigenvalues. Formally, we define the spectral residuals as follows.

\begin{definition}
Let $C$ ($< S$) be a given positive integer. We define the spectral residual $\tilde{\bm{\epsilon}}$ as a vector obtained by projecting the raw return vector $\bm{r}$ onto the space spanned by the principal portfolios with the smallest $S - C$ eigenvalues.
\end{definition}

In practice, we calculate the empirical version of the spectral residuals as follows.
Given a time window $H > 1$, we define a windowed signal $\bm{X}_t$ as
$
    \bm{X}_t
    := [\bm{r}_{t-H}, \ldots, \bm{r}_{t-1}].
$
We also denote by $\tilde{\bm{X}}_t$ a matrix obtained by subtracting empirical means of row vectors from $\bm{X}_t$. By the singular value decomposition (SVD), $\tilde{\bm{X}}_t$ can be decomposed as
\[
    \tilde{\bm{X}}_t
    = \bm{V}_t
    \underbrace{\diag(\sigma_1, \ldots, \sigma_S) \bm{U}_t}_{\text{principal portfolios}},
\]
where $\bm{V}_t$ is an $S \times S$ orthogonal matrix, $\bm{U}_t$ is an $S \times H$ matrix whose rows are mutually orthogonal unit vectors, and $\sigma_1 \geq \cdots \geq \sigma_S$ are singular values. Note that $\bm{V}_t^\top \tilde{\bm{X}}_t$ can be seen as the realized returns of the principal portfolios. Then, the (empirical) spectral residual at time $s$ ($t-H \le s \le t-1$) is computed as
\begin{equation}
    \tilde{\bm{\epsilon}}_s
    := \bm{A}_t \bm{r}_{s},
    \label{eq:spectral_residual}
\end{equation}
where $\bm{A}_t$ is the projection matrix defined as
\[
    \bm{A}_t := \bm{V}_t
    \mathrm{diag}(\underbrace{0, \ldots, 0}_{C},
    \underbrace{1, \ldots, 1}_{S - C}) \bm{V}_t^\top.
\]

\subsubsection{Relationship to factor models}
Although we defined the spectral residuals through PCA, they are also related to the generative model \eqref{eq:factor_intro}, and thus convey information about ``residual factors'' in the original sense.

In the finance literature, it has been pointed out that trading strategies depending only on the residual factor $\bm{\epsilon}_t$ in \eqref{eq:factor_intro} can be robust to structural changes in the overall market \cite{blitz2011residual, blitz2013short}. While estimating parameters in the linear factor model \eqref{eq:factor_intro} is typically done by factor analysis (FA) methods \cite{Bartholomew2011},
the spectral residual $\tilde{\bm{\epsilon}}_t$ is not exactly the same as the residual factors obtained from the FA.
Despite this, we can show the following result that the spectral residuals can hedge out the common market factors under a suitable condition.

\begin{prop}\label{prop:uncorrelated}
Let $\bm{r}$ be a random vector in $\RR^S$ generated according to a linear model
$
    \bm{r} = \bm{B} \bm{f} + \bm{\epsilon},
$
where $\bm{B}$ is an $S \times C$ matrix, and $\bm{f} \in \RR^{C}$ and $\bm{\epsilon} \in \RR^{S}$ are zero-mean random vectors.
Assume that the following conditions hold:
\begin{itemize}
    \item $\Var(f_i) = 1$ and $\Var(\epsilon_k) = \sigma > 0$.
    \item The coordinate variables in $\bm{f}$ and $\bm{\epsilon}$ are uncorrelated, that is, $\EE[f_i f_j] = 0$, $\EE[\epsilon_k \epsilon_\ell] = 0$, and $\EE[f_i \epsilon_k]  = 0$ hold for any $i \neq j$ and $k \neq \ell$.
\end{itemize}
Then, we have the followings.
\begin{enumerate}[label=(\roman*)]
    \item The spectral residual $\tilde{\bm{\epsilon}}$ defined in \eqref{eq:spectral_residual} is uncorrelated from the common factor $\bm{f}$.
    \item The covariance matrix of $\tilde{\bm{\epsilon}}$ is given as $\sigma^2 \bm{A}_\mathrm{res}$. Under a suitable assumption\footnote{See \com{Appendix B} for a precise statement.}, this can be approximated as a diagonal matrix, which means the coordinates variables of the spectral residual $\bm{\epsilon}_i$ $(i \in \set{1, \ldots, S})$ are almost uncorrelated.
\end{enumerate}
\end{prop}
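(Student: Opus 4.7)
The plan is to first read off the covariance structure implied by the hypotheses. Under the stated assumptions we have $\EE[\bm{f}\bm{f}^\top]=\bm{I}_C$, $\EE[\bm{\epsilon}\bm{\epsilon}^\top]=\sigma\bm{I}_S$, and $\EE[\bm{f}\bm{\epsilon}^\top]=\bm{0}$, so the return covariance factorizes cleanly as $\bm{\Sigma}=\bm{B}\bm{B}^\top+\sigma\bm{I}_S$. Writing the thin SVD $\bm{B}=\bm{U}_B\bm{D}_B\bm{W}_B^\top$ with $\bm{U}_B\in\RR^{S\times C}$ having orthonormal columns, this gives $\bm{\Sigma}=\bm{U}_B(\bm{D}_B^2+\sigma\bm{I}_C)\bm{U}_B^\top+\sigma\bm{U}_B^\perp(\bm{U}_B^\perp)^\top$, which exhibits the eigendecomposition of $\bm{\Sigma}$: the top $C$ eigenvectors span $\mathrm{col}(\bm{B})$ and the remaining $S-C$ span its orthogonal complement. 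Therefore the projection matrix $\bm{A}_{\mathrm{res}}$ (onto the bottom $S-C$ eigenspaces) satisfies $\bm{A}_{\mathrm{res}}\bm{B}=\bm{0}$ and is the orthogonal projector onto $\mathrm{col}(\bm{B})^\perp$.

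For part (i), the identity $\bm{A}_{\mathrm{res}}\bm{B}=\bm{0}$ immediately yields $\tilde{\bm{\epsilon}}=\bm{A}_{\mathrm{res}}\bm{r}=\bm{A}_{\mathrm{res}}\bm{\epsilon}$, and then $\EE[\tilde{\bm{\epsilon}}\bm{f}^\top]=\bm{A}_{\mathrm{res}}\EE[\bm{\epsilon}\bm{f}^\top]=\bm{0}$. For the first half of part (ii), I would use the fact that $\bm{A}_{\mathrm{res}}$ is symmetric and idempotent to compute $\Cov(\tilde{\bm{\epsilon}})=\bm{A}_{\mathrm{res}}(\sigma\bm{I}_S)\bm{A}_{\mathrm{res}}^\top=\sigma\bm{A}_{\mathrm{res}}$ (matching the proposition's statement up to the normalization convention on $\sigma$).

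For the approximate-diagonality claim in the second half of part (ii), the strategy is to rewrite $\bm{A}_{\mathrm{res}}=\bm{I}_S-\bm{P}_{\mathrm{col}(\bm{B})}$ and show that the off-diagonal entries of $\bm{P}_{\mathrm{col}(\bm{B})}=\bm{U}_B\bm{U}_B^\top$ are small, so that $\bm{A}_{\mathrm{res}}$ is close to a diagonal matrix with entries $1-\|(\bm{U}_B)_{i,\cdot}\|^2$. This is where the auxiliary hypothesis deferred to Appendix B enters: a natural sufficient condition is a mild incoherence or ``spread-out loadings'' assumption on $\bm{B}$ (for instance, $\bm{B}$ having entries of order $S^{-1/2}$ together with a near-isotropy condition on $\bm{B}^\top\bm{B}$), which forces the off-diagonals of $\bm{U}_B\bm{U}_B^\top$ to be of order $S^{-1/2}$ or smaller.

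The main obstacle is precisely this last step: pinning down the quantitative sense in which $\bm{A}_{\mathrm{res}}$ is ``approximately diagonal'' and identifying the right structural assumption on $\bm{B}$ that justifies it; parts (i) and the exact covariance formula in (ii) are essentially algebraic consequences of $\bm{A}_{\mathrm{res}}\bm{B}=\bm{0}$ and will be short. I would therefore spend most of the write-up on the incoherence-type condition, deferring the detailed statement to the appendix as the authors do, and keep the main-text argument to the clean spectral decomposition of $\bm{\Sigma}$ described above.
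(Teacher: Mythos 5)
Your argument is essentially the paper's own proof: the appendix likewise writes $\bm{\Sigma}=\bm{B}\bm{B}^\top+\sigma^2\bm{I}$, reads off the eigendecomposition (top $C$ eigenvectors spanning $\mathrm{col}(\bm{B})$, the rest its orthogonal complement), deduces $\bm{A}_\mathrm{res}\bm{B}=\bm{0}$ so that $\tilde{\bm{\epsilon}}=\bm{A}_\mathrm{res}\bm{\epsilon}$ is uncorrelated with $\bm{f}$ and has covariance $\sigma^2\bm{A}_\mathrm{res}$, and defers the near-diagonality to a separate condition. The only cosmetic differences are that the paper diagonalizes $\bm{B}\bm{B}^\top$ rather than taking the thin SVD of $\bm{B}$, and its ``suitable assumption'' requires each top eigenvector to be either $\delta$-spreading \emph{or} $\delta$-spiked (so concentrated single-stock factors are also covered), which is slightly broader than the pure incoherence condition you sketch.
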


In the above proposition, the first statement (i) claims that the spectral residual can eliminate the common factors without knowing the exact residual factors.
The latter statement (ii) justifies the diagonal approximation of the predicted covariance, which will be utilized in the next subsection.
For completeness, we provide the proof in \com{Appendix B}. Moreover, the assumption that $\bm{\epsilon}$ is isotropic can be relaxed in the following sense. If we assume that the residual factor $\bm{\epsilon}$ is ``almost isotropic'' and the common factors $\bm{B}\bm{f}$ have larger volatility contributions than $\bm{\epsilon}$, we can show that the linear transformation used in the spectral residual is close to the projection matrix eliminating the market factors.
Since the formal statement is somewhat involved, we give the details in \com{Appendix B}.

Besides, the spectral residual can be computed significantly faster than the FA-based methods. This is because the FA typically requires iterative executions of the SVD to solve a non-convex optimization problem, while the spectral residual requires it only once.
Section \ref{subsection-experiment-1} gives an experimental comparison of running times.

\subsection{Distributional prediction and portfolio construction}\label{subsection-distributional-prediction}

Our next goal is to construct a portfolio based on the extracted information.
To this end, we here introduce a method to forecast future distributions of the spectral residuals, and explain how we can convert the distributional features into executable portfolios.

\subsubsection{Distributional prediction}

Given a sequence of past realizations of residual factors, $\tilde{\epsilon}_{i, t - H}, \ldots, \tilde{\epsilon}_{i, t - 1}$, consider the problem of predicting the distribution of a future observation $\tilde{\epsilon}_{i, t}$.
Our approach is to learn a functional predictor for the conditional distribution $p(\tilde{\epsilon}_{i, t} \mid \tilde{\epsilon}_{i, t - H}, \ldots, \tilde{\epsilon}_{i, t - 1})$.
Since our final goal is to construct the portfolio, we only use predicted means and covariances, and we do not need the full information about the conditional distribution. Despite this, fitting symmetric models such as Gaussian distributions can be problematic because it is known that the distributions of returns are often skewed \cite{cont2000, lin2018}. 
To circumvent this, we utilize quantile regression \cite{koenker_2005}, an off-the-shelf nonparametric method to estimate conditional quantiles. Intuitively, if we obtain a sufficiently large number of quantiles of the target variable, we can reconstruct any distributional properties of that variable.
We train a function $\psi$ that predicts several conditional quantile values, and convert its output into estimators of conditional means and variances.
The overall procedure can be made to be differentiable, so we can incorporate it into modern deep learning frameworks. 



Here, we provide the details of the aforementioned procedure.
First, we give an overview for the quantile regression objective. Let $Y$ be a scalar-valued random variable, and $\bm{X}$ be another random variable. For $\alpha \in (0, 1)$, an $\alpha$-th conditional quantile of $Y$ given $\bm{X} = \bm{x}$ is defined as
\[
    y(\bm{x}; \alpha) := \inf \set{y': P(Y \leq y' \mid \bm{X} = \bm{x}) \geq \alpha}.
\]
It is known that $y(\bm{x}; \alpha)$ can be found by solving the following minimization problem
\[
    y(\bm{x}; \alpha) = \argmin_{y' \in \RR} \EE[\ell_\alpha(Y, y') \mid \bm{X} = \bm{x}],
\]
where $\ell_\alpha(y, y')$ is the pinball loss defined as
\[
    \ell_\alpha(y, y') := \max \{ (\alpha - 1) (y - y'), \alpha (y - y')\}.
\]

For our situation, the target variable is $y_t = \tilde{\epsilon}_{i, t}$ and the explanatory variable is $\bm{x}_t = (\tilde{\epsilon}_{i, t - H}, \ldots, \tilde{\epsilon}_{i, t - 1})^\top$.
We want to construct a function $\psi: \mathbb{R}^H \to \mathbb{R}$ that estimate the conditional $\alpha$-quantile of $y_t$. To this end, the quantile regression tries to solve the following minimization problem
\[
    \min_{\psi} \widehat{\EE}_{y_{t}, \bm{x}_{t}} [\ell_\alpha(y_{t}, \psi(\bm{x}_{t}))].
\]
Here, $\widehat{\EE}_{y_t, \bm{x}_t}$ is understood as taking the empirical expectation with respect to $y_t$ and $\bm{x}_t$ across $t$.
We should note that a similar application of the quantile regression to forecasting conditional quantiles of time series has been considered in \cite{Biau2011quantile}.

Next, let $Q > 0$ be a given integer, and let $\alpha_j = j/Q$ ($j = 1, \ldots, Q-1$) be an equispaced grid of quantiles. We consider the problem of simultaneously estimating $\alpha_j$-quantiles by a function $\psi: \RR^H \to \RR^{Q - 1}$. To do this, we define a loss function as
\[
    \mcL_Q(y_t, \psi(\bm{x}_t))
    := \sum_{j=1}^{Q - 1} \ell_{\alpha_j}(y_t, \psi_j(\bm{x}_t)),
\]
where $\psi_j(\bm{x}_t)$ is the $j$-th coordinate of $\psi(\bm{x}_t)$.

Once we obtain the estimated $Q - 1$ quantiles $\tilde{y}^{(j)}_t = \psi_j(\bm{x}_t)$ ($j = 1, \ldots, Q-1$), we can estimate the future mean of the target variable $y_t$ as 
\begin{equation}
    \hat{\mu}_t
    := \hat{\mu}(\tilde{\bm{y}}_t)
    = \frac{1}{Q-1} \sum^{Q-1}_{j=1} \tilde{y}_t^{(j)}.
    \label{eq:mu_t}
\end{equation}
Similarly, we can estimate the future variance by the sample variance of $\tilde{y}^{(j)}_t$
\begin{equation}
    \hat{\sigma}_t
    := \hat{\sigma}(\tilde{\bm{y}}_t)
    = \frac{1}{Q - 1} \sum (\tilde{y}_t^{(j)} - \hat{\mu}_{t})^2
    \label{eq:sigma_t}
\end{equation}
or its robust counterpart such as the median absolute deviation (MAD).

\subsubsection{Portfolio construction}

Given the estimated means and variances of future spectral residuals, we finally construct a portfolio based on optimality criteria offered by modern portfolio theory \cite{markowitz1952portfolio}.
As mentioned in Section \ref{subsection-portfolio-theory}, the formula for the optimal portfolio requires the means and the covariances of the returns.
Thanks to Proposition \ref{prop:uncorrelated}-(ii), we can
approximate the covariance matrix of the spectral residual by a diagonal matrix.
Precisely, once we calculate the predicted mean $\hat{\mu}_{t, j}$ and the variance $\hat{\sigma}_{t, j}^2$ of the spectral residual at time $t$, the weight for $j$-th asset is given as
$
    \hat{b}_j := \lambda^{-1} \hat{\mu}_{t, j} / \hat{\sigma}_{t, j}^2.
$

In the experiments in Section \ref{sec-experiments}, we compare the performances of zero-investment portfolios.
For trading strategies that do not output zero-investment portfolios, we apply a common transformation to portfolios to be centered and normalized.
As a result, the eventual portfolio does not depend on the risk aversion parameter $\lambda$. See \com{Appendix A.1} for details.

\subsection{Network architectures}\label{subsection-architecture}

For the model of the quantile predictor $\psi$, we introduce two architectures for neural network models that take into account scale invariances studied in finance.

\subsubsection{Volatility invariance}
First, we consider an invariance property on amplitudes of financial time series. It is known that financial time series data exhibit a property called volatility clustering \cite{mandelbrot1997variation}. Roughly speaking, volatility clustering describes a phenomenon that large changes in financial time series tend to be followed by large changes, while small changes tend to be followed by small changes. As a result, if we could observe a certain signal as a financial time series, a signal obtained by positive scalar multiplication can be regarded as another plausible realization of a financial time series.

To incorporate such an amplitude invariance property into the model architectures, we leverage the class of positive homogeneous functions. Here, a function $f: \RR^n \to \RR^m$ is said to be positive homogeneous if $f(a \bm{x}) = a f(\bm{x})$ holds for any $\bm{x} \in \RR^n$ and $a > 0$. For example, we can see that any linear functions and any ReLU neural networks with no bias terms are positive homogeneous. More generally, we can model the class of positive homogeneous functions as follows. Let $\tilde{\psi}: S^{H - 1} \to \mathbb{R}^{Q - 1}$ be any function defined on the $H - 1$ dimensional sphere $S^{H - 1} = \{ \bm{x} \in \RR^{H}: \| \bm{x} \| = 1 \}$. Then, we obtain a positive homogeneous function as
\begin{equation}
    \psi(\bm{x})
    = \| \bm{x} \| \tilde{\psi} \left(
        \frac{\bm{x}}{\| \bm{x} \|}
    \right).
    \label{eq:volatility-norm}
\end{equation}
Thus, we can convert any function class on the sphere into the model of amplitude invariant predictors.

\subsubsection{Time-scale invariance}

Second, we consider an invariance property with respect to time-scale. There is a well-known hypothesis that time series of stock prices have \textit{fractal structures} \cite{peters1994fractal}. The fractal structure refers to a self-similarity property of a sequence. That is, if we observe a single sequence in several different sampling rates, we cannot infer the underlying sampling rates from the shape of downsampled sequences.
The fractal structure has been observed in several real markets 
\cite{cao2013chinese, mensi2018islamic, lee2018asymmetric}.
See \com{Remark 1} in \com{Appendix A.2} for further discussion on this property.

To take advantage of the fractal structure, we propose a novel network architecture that we call \textit{fractal networks}. The key idea is that we can effectively exploit the self-similarity by applying a single common operation to multiple sub-sequences with different resolutions. By doing so, we expect that we can increase sample efficiency and reduce the number of parameters to train.

Here, we give a brief overview of the proposed architecture, while a more detailed explanation will be given in \com{Appendix A.2}.
Our model consists of (a) the resampling mechanism and (b) two neural networks $\psi_1$ and $\psi_2$. The input-output relation of our model is described as the following procedure. First, given a single sequence $\bm{x}$ of stock returns, the resampling mechanism $\mathrm{Resample}(\bm{x}, \tau)$ generates a sequence that corresponds to sampling rates specified by a scale parameter $0 < \tau \leq 1$.
We apply $\mathrm{Resample}$ procedure for $L$ different parameters $\tau_1 < \ldots < \tau_L = 1$ and generate $L$ sequences. Next, we apply a common non-linear transformation $\psi_1$ modeled by a neural network. Finally, by taking the empirical mean of these sequences, we aggregate the information on different sampling rates and apply another network $\psi_2$. To sum up, the overall procedure can be written in the following single equation
\begin{equation}
    \psi(\bm{x}) = \psi_2
    \left(
    \frac{1}{L}
    \sum^{L}_{i=1}\psi_1(\mathrm{Resample}(\bm{x}, \tau_i))
    \right).
\end{equation}

\section{Experiments}\label{sec-experiments}
We conducted a series of experiments to demonstrate the effectiveness of our methods on real market data.
In Section \ref{sec-dataset}, we describe the details of the dataset and some common experimental settings used throughout this section.
In Section \ref{subsection-experiment-1}, we test the validity of the spectral residual by a preliminary experiment.
In Section \ref{sec-dpo-experiments}, we evaluate the performance of our proposed system by experiments on U.S.~market data. We also conducted similar experiments on Japanese market data and obtained consistent results. Due to the space limitation, we provide the entire results for Japanese market data in \com{Appendix E}.

\subsection{Dataset description and common settings}\label{sec-dataset}

\subsubsection{U.S.~market data}
For U.S.~market data, we used the daily prices of stocks listed in S\&P 500 from January 2000 to April 2020.
We used data before January 2008 for training and validation and the remainder for testing.
We obtained the data from Alpha Vantage \footnote{\url{https://www.alphavantage.co/}}.

We used opening prices because of the following reasons. First, the trading volume at the opening session is larger than that at the closing session \cite{amihud1987trading}, which means that trading with opening prices is practically easier than trading with the closing prices. Moreover, a financial institution cannot trade a large amount of stocks during the closing period because it can be considered as an illegal action known as ``banging the close''.

\subsubsection{Common experimental settings}

We adopted the delay parameter $d = 1$ (i.e., one-day delay) for updating portfolios. 
We set the look-back window size as $H = 256$, i.e., all prediction models can access the historical stock prices up-to preceding $256$ business days.
For other parameters used in the experiments, see \com{Appendix C.3} as well.

\subsubsection{Evaluation metrics}
We list the evaluation metrics used throughout our experiments.
\begin{itemize}
    \item \textit{Cumulative Wealth (CW)} is the total return yielded from the trading strategy: $\mathrm{CW}_T := \prod^{T}_{i=1} (1 + R_t)$.
    \item \textit{Annualized Return (AR)} is an annualized return rate defined as
    $\mathrm{AR}_t := (T_\mathrm{Y}/T) \sum^{T}_{i=1} R_t$,
    where $T_\mathrm{Y}$ is the average number of holding periods in a year.
    \item \textit{Annualized Volatility (AVOL)} is annualized risk defined as
    $\mathrm{AVOL}_T := ((T_\mathrm{Y}/T) \sum^{T}_{i=1} R_t^2 )^{1/2}$.
    \item \textit{Annualized Sharpe ratio (ASR)} is an annualized risk-adjusted return \cite{sharpe1994sharpe}. It is defined as
    $\mathrm{ASR}_T := \mathrm{AR}_T/\mathrm{AVOL}_T$.
\end{itemize}

As mentioned in Section \ref{subsection-portfolio-theory}, we are mainly interested in ASR as the primary evaluation metric. AR and AVOL are auxiliary metrics for calculating ASR. While CW represents the actual profits, it often ignores the existence of large risk values.
In addition to these, we also calculated some evaluation criteria commonly used in finance: Maximum DrawDown (MDD), Calmar Ratio (CR), and Downside Deviation Ratio (DDR). For completeness, we provide precise definitions in \com{Appendix C.1}.

\subsection{Validity of spectral residuals}\label{subsection-experiment-1}

As suggested in \ref{subsection-extraction-residual-factors}, the spectral residuals can be useful to hedge out the undesirable exposure to the market factors.
To verify this, we compared the performances of trading strategies over (i) the raw returns, (ii) the residual factors extracted by the factor analysis (FA), and (iii) the spectral residuals.

For the FA, we fit the factor model \eqref{eq:factor_intro} with $K = 30$ by the maximum likelihood method \cite{Bartholomew2011} and extracted residual factors as the remaining part. For the spectral residual, we obtain the residual sequence by subtracting $C = 30$ principal components from the raw returns. We applied both methods to windowed data with length $H  = 256$.

In order to be agnostic to the choice of training algorithms, we used a simple reversal strategy. Precisely, for the raw return sequence $\bm{r}_t$, we used a deterministic strategy obtained simply by normalizing the negation of the previous observation $- \bm{r}_{t-1}$ to be a zero-investment portfolio (see \com{Appendix C.2} for the precise formula).
We defined reversal strategies over residual sequences in similar ways.

\begin{figure}[t]
    \centering
    \includegraphics[width=0.9\linewidth]{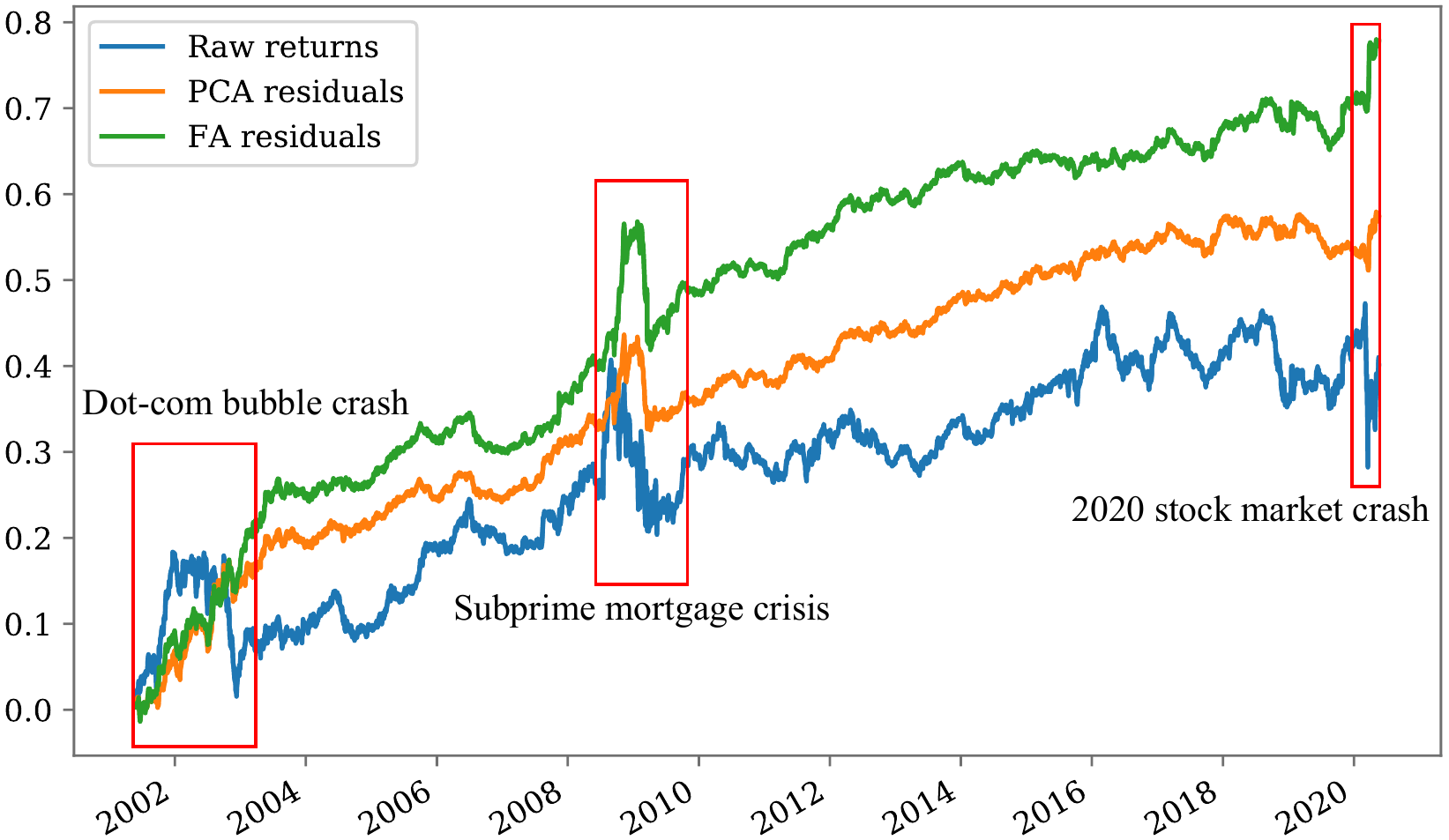}
    \caption{Cumulative returns of reversal strategies over raw returns, the FA residuals, and the spectral residuals. The reversal-based strategies are more robust against financial crises.}
    \label{fig:residual_on_crisis}
\end{figure}

\begin{figure*}[t]
    \begin{minipage}[t]{.48\textwidth}
        \centering
        \begin{tikzpicture}
            \node[inner sep=0] (I) at (0, 0)
            {\includegraphics[width=0.90\textwidth]{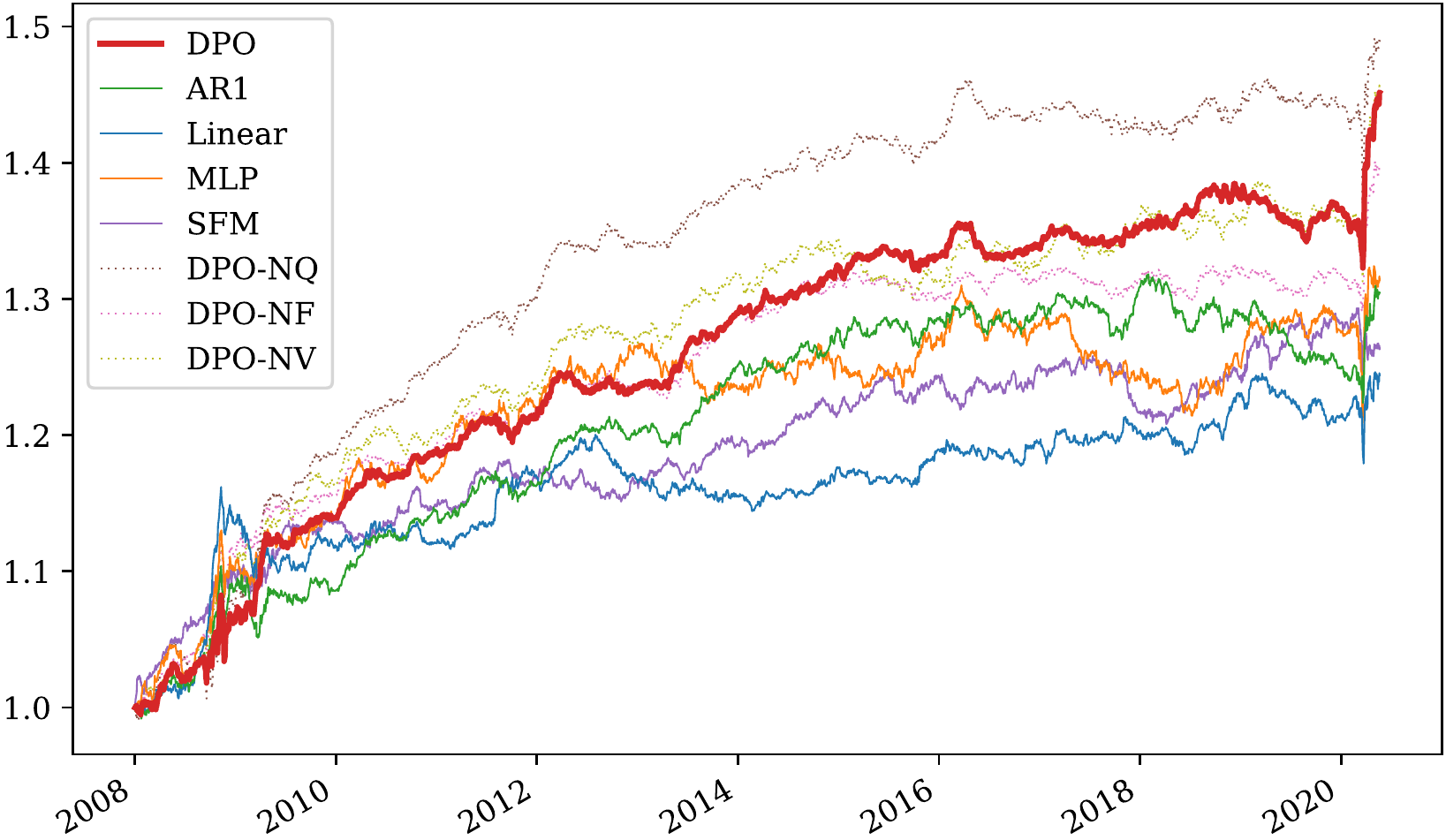}};
            \node[left=0cm of I] {\rotatebox{90}{Cumulative wealth}};
            \node[below=0cm of I] {Date};
        \end{tikzpicture}
        \caption{The Cumulative Wealth in U.S.~market.}
        \label{fig-final-ablation-us}
    \end{minipage}
    \hfill
    \begin{minipage}[t]{.48\textwidth}
        \centering
        \begin{tikzpicture}
            \node[inner sep=0] (I) at (0, 0)
            {\includegraphics[width=0.90\textwidth]{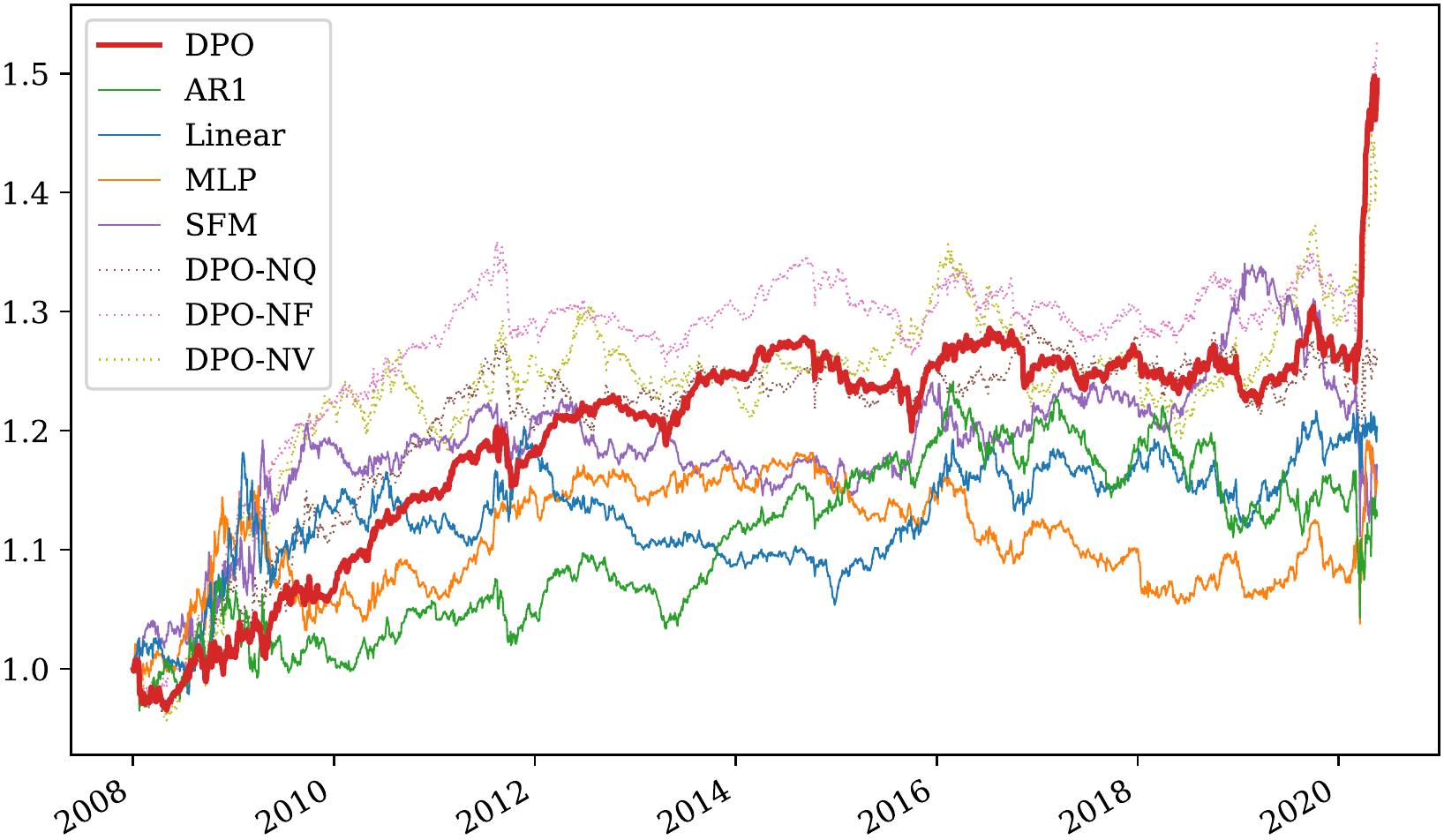}};
            \node[left=0cm of I] {\rotatebox{90}{Cumulative wealth}};
            \node[below=0cm of I] {Date};
        \end{tikzpicture}
        \caption{The Cumulative Wealth in U.S.~market data without spectral residual extraction.
        }
        \label{fig-final-ablation-us-raw}
    \end{minipage}
\end{figure*}

Figure \ref{fig:residual_on_crisis} shows the cumulative returns of reversal strategies performed on the Japanese market data.
We see that the reversal strategy based on the raw returns is significantly affected by several well-known financial crises,
including the dot-com bubble crush in the early 2000s, the 2008 subprime mortgage crisis, and the 2020 stock market crush.
On the other hand, two residual-based strategies seem to be more robust against these financial crises. The spectral residual performed similarly to the FA residual in cumulative returns. Moreover, in terms of the Sharpe ratio, the spectral residuals ($= 2.86$) performed better than the FA residuals ($= 2.64$).

Remarkably, the spectral residuals were calculated much faster than the FA residuals. In particular, we calculated both residuals using the entire dataset which contains records of all the stock prices for $5,000$ days. For the PCA and the FA, we used implementations in the scikit-learn package \cite{scikit-learn}, and all the computation were run on 18 CPU cores of Intel Xeon Gold 6254 Processor (3.1 GHz). Then, extracting the spectral residuals took approximately 10 minutes, while the FA took approximately 13 hours.

\subsection{Performance evaluation of the proposed system}\label{sec-dpo-experiments}

We evaluated the performance of our proposed system described in Section \ref{sec-proposed} on U.S.~market data. Corresponding results for Japanese market data are provided in \com{Appendix E}.

\subsubsection{Baseline Methods}
We compare our system with the following baselines:
(i) \texttt{Market} is the uniform buy-and-hold strategy.
(ii) \texttt{AR(1)} is the $\mathrm{AR}(1)$ model with all coefficients being $-1$. This can be seen as the simple reversal strategy.
(iii) \texttt{Linear} predicts returns by ordinary linear regression based on previous $H$ raw returns.
(iv) \texttt{MLP} predicts returns by a multi-layer perceptron with batch normalization and dropout \cite{pal1992multilayer,ioffe2015batch,srivastava14dropout}.
(v) \texttt{SFM} is one of state-of-the-art stock price prediction algorithms based on the State Frequency Memory RNNs \cite{zhang2017stock}.

Additionally, we compare our proposed system (\texttt{DPO}) with some ablation models, which are similar to \texttt{DPO} except for the following points.
\begin{itemize}
    \item DPO with No Quantile Prediction (\texttt{DPO-NQ}) does not use the information of the full distributional prediction, but instead it outputs conditional means trained by the $L_2$ loss.
    \item DPO with No Fractal Network (\texttt{DPO-NF}) uses a simple multi-layer perceptron instead of the fractal network.
    \item DPO with No Volatility Normalization (\texttt{DPO-NV}) does not use the normalization \eqref{eq:volatility-norm} in the fractal network.
\end{itemize}

\subsubsection{Performance on real market data}\label{subsection-experiment-2}

\begin{table}[t]
    \centering
    \caption{Performance comparison on U.S.~market.
    All the methods except for \texttt{Market} are applied to the spectral residuals (SRes).}
    \label{table-final-us}
    {\scriptsize
    \tabcolsep = 5dd
    \begin{tabular}{@{}lcccccc@{}}
    \toprule
    & ASR$\uparrow$ & AR$\uparrow$ & AVOL$\downarrow$ & DDR$\uparrow$ & CR$\uparrow$ & MDD$\downarrow$ \\ \midrule
    \textbf{Market} & +0.607 & \textbf{+0.130} & 0.215 & +0.939 & +0.263 & 0.496 \\ \midrule
    
    \textbf{AR(1) on SRes} & +0.858 & +0.021 & 0.025 & +1.470 & +0.295 & 0.072 \\
    \textbf{Linear on SRes} & +0.724 & +0.017 & 0.024 & +1.262 & +0.298 & 0.059 \\
    \textbf{MLP on SRes} & +0.728 & +0.022 & 0.030 & +1.280 & +0.283 & 0.077 \\
    \textbf{SFM on SRes} & +0.709 & +0.019 & 0.026 & +1.211 & +0.323 & 0.058 \\
    \midrule
    
    \textbf{DPO-NQ} & +1.237 & +0.032 & 0.026 & +2.169 & +0.499 & 0.063 \\
    \textbf{DPO-NF} & +1.284 & +0.027 & \textbf{0.021} & +2.347 & +0.627 & \textbf{0.042} \\
    \textbf{DPO-NV} & +1.154 & +0.030 & 0.026 & +2.105 & +0.562 & 0.053 \\
    \textbf{DPO (Proposed)} & \textbf{+1.393} & +0.030 & \textbf{0.021} & \textbf{+2.561} & \textbf{+0.656} & 0.045 \\
    \bottomrule
    \end{tabular}
    }
\end{table}

Figure \ref{fig-final-ablation-us} shows the cumulative wealth (CW) achieved in U.S.~market. Table \ref{table-final-us} shows the results for the other evaluation metrics presented in Section \ref{sec-dataset}. For parameter $C$ of the spectral residual, we used $C = 10$, which we determined sorely from the training data (see \com{Appendix D.2} for details).
Overall, our proposed method \texttt{DPO} outperformed the baseline methods in multiple evaluation metrics. Regarding the comparison against three ablation models, we make the following observations.

\begin{enumerate}
    \item \textbf{Effect of the distributional prediction}. We found that introducing the distributional prediction significantly improved the ASR. While \texttt{DPO-NQ} achieved the best CW, \texttt{DPO} performed better in the ASR. It suggests that, without the variance prediction, \texttt{DPO-NQ} tends to pursue the returns without regard to taking the risks. Generally, we observed that \texttt{DPO} reduced the AVOL while not losing the AR. 
    \item \textbf{Effect of the fractal network}. Introducing the fractal network architecture also improved the performance in multiple evaluation metrics. In both markets, we observed that the fractal network contributed to increasing the AR while keeping the AVOL, which is suggestive of the effectiveness of leveraging the financial inductive bias on the return sequence.
    \item \textbf{Effect of the normalization}. We also saw the effectiveness of the normalization \eqref{eq:volatility-norm}. Comparing \texttt{DPO} and \texttt{DPO-NV}, the normalization affected both of the AR and the AVOL, resulting in the improvement in the ASR. This may occur because the normalization improves the sample efficiency by reducing the degrees of freedom of the model.
\end{enumerate}

\begin{table}[t]
    \centering
    \caption{Performance comparison on U.S.~market without spectral residual extraction.
    }
    \label{table-final-us-raw}
    {\scriptsize
    \tabcolsep = 5dd
    
    \begin{tabular}{@{}lcccccc@{}}
    \toprule
    & ASR$\uparrow$ & AR$\uparrow$ & AVOL$\downarrow$ & DDR$\uparrow$ & CR$\uparrow$ & MDD$\downarrow$ \\ \midrule
    
    \textbf{AR(1)} & +0.212 & +0.011 & 0.051 & +0.355 & +0.067 & 0.160 \\
    \textbf{Linear} & +0.304 & +0.016 & 0.052 & +0.485 & +0.127 & 0.125 \\
    \textbf{MLP} & +0.261 & +0.013 & 0.048 & +0.424 & +0.103 & 0.122 \\
    \textbf{SFM} & +0.264 & +0.014 & 0.051 & +0.428 & +0.079 & 0.171 \\
    \midrule
    
    \textbf{DPO-NQ} & +0.405 & +0.020 & 0.048 & +0.655 & +0.172 & 0.114 \\
    \textbf{DPO-NF} & +0.854 & \textbf{+0.034} & 0.040 & +1.472 & +0.436 & 0.078 \\
    \textbf{DPO-NV} & +0.542 & +0.029 & 0.054 & +0.922 & +0.238 & 0.123 \\
    \textbf{DPO} & \textbf{+0.874} & +0.032 & \textbf{0.037} & \textbf{+1.485} & \textbf{+0.524} & \textbf{0.061} \\
    
    \bottomrule
    \end{tabular}
    }
\end{table}

To see the effect of the spectral residuals, we also evaluated our proposed method and the baseline methods on the raw stock returns.
Figure \ref{fig-final-ablation-us-raw} and Table \ref{table-final-us-raw} show the results. Compared to the corresponding results with spectral residuals, we found that the spectral residuals consistently improved the performance for every method. Some further intriguing observations are summarized as follows.

\begin{enumerate}
    \item With the spectral residuals, \texttt{AR(1)} achieved the best ASR among the baseline methods (Table \ref{table-final-us}), which has not been observed on the raw return sequence (Table \ref{table-final-us-raw}). This suggests that the spectral residuals encourage the reversal phenomenon \cite{poterba1988reversion} by suppressing the common market factors.
    Interestingly, without extracting the spectral residuals, the CWs are crossing during the test period, and no single baseline method consistently beats others (Figure \ref{fig-final-ablation-us-raw}). A possible reason is that the strong correlation between the raw stock returns increases the exposure to the common market risks.
    \item We found that our network architectures are still effective on the raw sequences. In particular, \texttt{DPO} outperformed all the other methods in multiple evaluation metrics.
\end{enumerate}

\section{Related Work}

\subsection{Factor Models and Residual Factors}

Trading based on factor models is one of the popular strategies for quantitative portfolio management (e.g., \cite{nakagawa2018deep}).
One of the best-known factor models is Fama and French\cite{fama1992,fama1993common}, 
and they put forward a model explaining returns in the US equity market with three factors: the market return factor, the size (market capitalization) factor and the value (book-to-market) factor.

Historically, residual factors are treated as errors in factor models\cite{sharpe1964capital}.
However, \cite{blitz2011residual,blitz2013short} suggested that there exists predictability in residual factors.
In modern portfolio theory, less correlation of investment returns enables to earn larger risk-adjusted returns \cite{markowitz1952portfolio}.
Residual factors are less correlated than the raw stock returns by its nature.
Consequently, \cite{blitz2011residual,blitz2013short} demonstrated that residual factors enable to earn larger risk-adjusted returns.

\subsection{Asset Price Prediction using Deep Learning}
With the recent advance of deep learning, various deep neural networks are applied to stock price prediction \cite{chen2019investment}.
Some deep neural networks for time series are also applied to stock price prediction \cite{fischer2018deep}.

Compared to other classical machine learning methods, deep learning enables learning with fewer a priori representational assumptions if provided with sufficient amount of data and computational resources. Even if data is insufficient, introducing inductive biases to a network architecture can still facilitate deep learning \cite{battaglia2018relational}.
Technical indicators are often used for stock prediction (e.g., \cite{metghalchi2012moving,neely2014forecasting}), and \cite{li2019individualized} used technical indicators as inductive biases of a neural network.
\cite{zhang2017stock} used a recurrent model that can analyze frequency domains so as to distinguish trading patterns of various frequencies.

\section{Conclusions}
We proposed a system for constructing portfolios. The key technical ingredients are (i) a spectral decomposition-based method to hedge out common market factors and (ii) a distributional prediction method based on a novel neural network architecture incorporating financial inductive biases. Through empirical evaluations on the real market data, we demonstrated that our proposed method can significantly improve the performance of portfolios on multiple evaluation metrics.
Moreover, we verified that each of our proposed techniques is effective on its own, and we believe that our techniques may have wide applications in various financial problems.

\section*{Acknowledgment}

We thank the anonymous reviewers for their constructive suggestions and comments. We also thank Masaya Abe, Shuhei Noma, Prabhat Nagarajan and Takuya Shimada for helpful discussions.

\bibliography{main}

\begin{thebibliography}{60}
\providecommand{\natexlab}[1]{#1}
\providecommand{\url}[1]{\texttt{#1}}
\providecommand{\urlprefix}{URL }
\expandafter\ifx\csname urlstyle\endcsname\relax
  \providecommand{\doi}[1]{doi:\discretionary{}{}{}#1}\else
  \providecommand{\doi}{doi:\discretionary{}{}{}\begingroup
  \urlstyle{rm}\Url}\fi

\bibitem[{Amihud and Mendelson(1987)}]{amihud1987trading}
Amihud, Y.; and Mendelson, H. 1987.
\newblock Trading mechanisms and stock returns: An empirical investigation.
\newblock \emph{The Journal of Finance} 42(3): 533--553.

\bibitem[{Bartholomew, Knott, and Moustaki(2011)}]{Bartholomew2011}
Bartholomew, D.; Knott, M.; and Moustaki, I. 2011.
\newblock \emph{Latent Variable Models and Factor Analysis: A Unified
  Approach}.
\newblock Wiley, 3rd edition.

\bibitem[{Battaglia et~al.(2018)Battaglia, Hamrick, Bapst, Sanchez-Gonzalez,
  Zambaldi, Malinowski, Tacchetti, Raposo, Santoro, Faulkner
  et~al.}]{battaglia2018relational}
Battaglia, P.~W.; Hamrick, J.~B.; Bapst, V.; Sanchez-Gonzalez, A.; Zambaldi,
  V.; Malinowski, M.; Tacchetti, A.; Raposo, D.; Santoro, A.; Faulkner, R.;
  et~al. 2018.
\newblock Relational inductive biases, deep learning, and graph networks.
\newblock \emph{arXiv preprint arXiv:1806.01261} .

\bibitem[{Biau and Patra(2011)}]{Biau2011quantile}
Biau, G.; and Patra, B. 2011.
\newblock Sequential Quantile Prediction of Time Series.
\newblock \emph{IEEE Transactions on Information Theory} 57(3): 1664--1674.

\bibitem[{Blitz et~al.(2013)Blitz, Huij, Lansdorp, and
  Verbeek}]{blitz2013short}
Blitz, D.; Huij, J.; Lansdorp, S.; and Verbeek, M. 2013.
\newblock Short-term residual reversal.
\newblock \emph{Journal of Financial Markets} 16(3): 477--504.

\bibitem[{Blitz, Huij, and Martens(2011)}]{blitz2011residual}
Blitz, D.; Huij, J.; and Martens, M. 2011.
\newblock Residual momentum.
\newblock \emph{Journal of Empirical Finance} 18(3): 506--521.

\bibitem[{Calomiris, Love, and Peria(2010)}]{calomiris2010crisis}
Calomiris, C.~W.; Love, I.; and Peria, M. S.~M. 2010.
\newblock Crisis" Shock Factors" and the Cross-Section of Global Equity
  Returns.
\newblock Technical report, National Bureau of Economic Research.

\bibitem[{Cao, Cao, and Xu(2013)}]{cao2013chinese}
Cao, G.; Cao, J.; and Xu, L. 2013.
\newblock Asymmetric multifractal scaling behavior in the {Chinese} stock
  market: Based on asymmetric {MF-DFA}.
\newblock \emph{Physica A: Statistical Mechanics and its Applications} 392(4):
  797--807.

\bibitem[{Chen et~al.(2019)Chen, Zhao, Bian, Xing, and
  Liu}]{chen2019investment}
Chen, C.; Zhao, L.; Bian, J.; Xing, C.; and Liu, T.-Y. 2019.
\newblock Investment behaviors can tell what inside: Exploring stock intrinsic
  properties for stock trend prediction.
\newblock In \emph{Proceedings of the 25th ACM SIGKDD International Conference
  on Knowledge Discovery \& Data Mining}, 2376--2384.

\bibitem[{Choudhry and Garg(2008)}]{choudhry2008hybrid}
Choudhry, R.; and Garg, K. 2008.
\newblock A hybrid machine learning system for stock market forecasting.
\newblock \emph{World Academy of Science, Engineering and Technology} 39(3):
  315--318.

\bibitem[{Cont(2000)}]{cont2000}
Cont, R. 2000.
\newblock Empirical properties of asset returns: stylized facts and statistical
  issues.
\newblock \emph{Quantitative Finance} 1: 223--236.

\bibitem[{Davis and Kahan(1970)}]{davis1970}
Davis, C.; and Kahan, W.~M. 1970.
\newblock The Rotation of Eigenvectors by a Perturbation. III.
\newblock \emph{SIAM Journal on Numerical Analysis} 7(1): 1--46.

\bibitem[{Devlin et~al.(2019)Devlin, Chang, Lee, and
  Toutanova}]{Devlin2019BERT}
Devlin, J.; Chang, M.-W.; Lee, K.; and Toutanova, K. 2019.
\newblock BERT: Pre-training of Deep Bidirectional Transformers for Language
  Understanding.
\newblock In \emph{NAACL-HLT}.

\bibitem[{Fama and French(1992)}]{fama1992}
Fama, E.~F.; and French, K.~R. 1992.
\newblock The Cross-Section of Expected Stock Returns.
\newblock \emph{The Journal of Finance} 47(2): 427--465.

\bibitem[{Fama and French(1993)}]{fama1993common}
Fama, E.~F.; and French, K.~R. 1993.
\newblock Common risk factors in the returns on stocks and bonds.
\newblock \emph{Journal of} .

\bibitem[{Fama and French(2015)}]{fama2015}
Fama, E.~F.; and French, K.~R. 2015.
\newblock A five-factor asset pricing model.
\newblock \emph{Journal of Financial Economics} 116(1): 1--22.

\bibitem[{Fischer and Krauss(2018)}]{fischer2018deep}
Fischer, T.; and Krauss, C. 2018.
\newblock Deep learning with long short-term memory networks for financial
  market predictions.
\newblock \emph{European Journal of Operational Research} 270(2): 654--669.

\bibitem[{{Graves}, {Mohamed}, and {Hinton}(2013)}]{Graves2013RNN}
{Graves}, A.; {Mohamed}, A.; and {Hinton}, G. 2013.
\newblock Speech recognition with deep recurrent neural networks.
\newblock In \emph{2013 IEEE International Conference on Acoustics, Speech and
  Signal Processing}, 6645--6649.
\newblock ISSN 2379-190X.

\bibitem[{Grossman and Zhou(1993)}]{grossman1993optimal}
Grossman, S.~J.; and Zhou, Z. 1993.
\newblock Optimal investment strategies for controlling drawdowns.
\newblock \emph{Mathematical finance} 3(3): 241--276.

\bibitem[{Hochreiter and Schmidhuber(1997)}]{Hochreiter1997LSTM}
Hochreiter, S.; and Schmidhuber, J. 1997.
\newblock Long Short-Term Memory.
\newblock \emph{Neural Computation} 9(8): 1735--1780.

\bibitem[{Ioffe and Szegedy(2015)}]{ioffe2015batch}
Ioffe, S.; and Szegedy, C. 2015.
\newblock Batch normalization: Accelerating deep network training by reducing
  internal covariate shift.
\newblock \emph{arXiv preprint arXiv:1502.03167} .

\bibitem[{Jegadeesh and Titman(1993)}]{jegadeesh1993moment}
Jegadeesh, N.; and Titman, S. 1993.
\newblock Returns to Buying Winners and Selling Losers: Implications for Stock
  Market Efficiency.
\newblock \emph{The Journal of Finance} 48(1): 65--91.

\bibitem[{Kan and Zhou(2007)}]{kan2007optimal}
Kan, R.; and Zhou, G. 2007.
\newblock Optimal portfolio choice with parameter uncertainty.
\newblock \emph{Journal of Financial and Quantitative Analysis} 42(3):
  621--656.

\bibitem[{Kingma and Ba(2015)}]{Kingma2015Adam}
Kingma, D.~P.; and Ba, J. 2015.
\newblock Adam: A Method for Stochastic Optimization.
\newblock In \emph{International Conference for Learning Representations
  (ICLR)}.

\bibitem[{Kintzel(2007)}]{kintzel2007portfolio}
Kintzel, D. 2007.
\newblock Portfolio theory, life-cycle investing, and retirement income.
\newblock \emph{Social Security Administration Policy Brief} 2007-02.

\bibitem[{Koenker(2005)}]{koenker_2005}
Koenker, R. 2005.
\newblock \emph{Quantile Regression}.
\newblock Econometric Society Monographs. Cambridge University Press.
\newblock \doi{10.1017/CBO9780511754098}.

\bibitem[{Krauss, Do, and Huck(2017)}]{krauss2017deep}
Krauss, C.; Do, X.~A.; and Huck, N. 2017.
\newblock Deep neural networks, gradient-boosted trees, random forests:
  Statistical arbitrage on the S\&P 500.
\newblock \emph{European Journal of Operational Research} 259(2): 689--702.

\bibitem[{Kristoufek and Vosvrda(2014)}]{kristoufek2014efficiency}
Kristoufek, L.; and Vosvrda, M. 2014.
\newblock Measuring capital market efficiency: long-term memory, fractal
  dimension and approximate entropy.
\newblock \emph{The European Physical Journal B} 87(7): 162.

\bibitem[{Krizhevsky, Sutskever, and Hinton(2012)}]{Krizhevsky2012CNN}
Krizhevsky, A.; Sutskever, I.; and Hinton, G.~E. 2012.
\newblock ImageNet Classification with Deep Convolutional Neural Networks.
\newblock In Pereira, F.; Burges, C. J.~C.; Bottou, L.; and Weinberger, K.~Q.,
  eds., \emph{Advances in Neural Information Processing Systems 25},
  1097--1105. Curran Associates, Inc.
\newblock
  \urlprefix\url{http://papers.nips.cc/paper/4824-imagenet-classification-with-deep-convolutional-neural-networks.pdf}.

\bibitem[{LeCun et~al.(1999)LeCun, Haffner, Bottou, and Bengio}]{LeCun1999CNN}
LeCun, Y.; Haffner, P.; Bottou, L.; and Bengio, Y. 1999.
\newblock Object Recognition with Gradient-Based Learning.
\newblock In \emph{Shape, Contour and Grouping in Computer Vision}, 319--345.
  Springer Berlin Heidelberg.

\bibitem[{Lee et~al.(2018)Lee, Song, Kim, and Chang}]{lee2018asymmetric}
Lee, M.; Song, J.; Kim, S.; and Chang, W. 2018.
\newblock Asymmetric market efficiency using the index-based
  asymmetric-{MFDFA}.
\newblock \emph{Physica A: Statistical Mechanics and its Applications} 512(15):
  1278--1294.

\bibitem[{Li et~al.(2019)Li, Yang, Zhao, Bian, Qin, and
  Liu}]{li2019individualized}
Li, Z.; Yang, D.; Zhao, L.; Bian, J.; Qin, T.; and Liu, T.-Y. 2019.
\newblock Individualized indicator for all: Stock-wise technical indicator
  optimization with stock embedding.
\newblock In \emph{Proceedings of the 25th ACM SIGKDD International Conference
  on Knowledge Discovery \& Data Mining}, 894--902.

\bibitem[{Lin and Liu(2018)}]{lin2018}
Lin, T.-C.; and Liu, X. 2018.
\newblock Skewness, individual investor preference, and the cross-section of
  stock returns.
\newblock \emph{Review of Finance} 22(5): 1841--1876.

\bibitem[{Lux and Marchesi(2000)}]{lux2000volatility}
Lux, T.; and Marchesi, M. 2000.
\newblock Volatility clustering in financial markets: a microsimulation of
  interacting agents.
\newblock \emph{International journal of theoretical and applied finance}
  3(04): 675--702.

\bibitem[{Malkiel and Fama(1970)}]{malkiel1970efficient}
Malkiel, B.~G.; and Fama, E.~F. 1970.
\newblock Efficient capital markets: A review of theory and empirical work.
\newblock \emph{The journal of Finance} 25(2): 383--417.

\bibitem[{Mandelbrot(1997)}]{mandelbrot1997variation}
Mandelbrot, B.~B. 1997.
\newblock The variation of certain speculative prices.
\newblock In \emph{Fractals and scaling in finance}, 371--418. Springer.

\bibitem[{Mandelbrot and Ness(1968)}]{mandelbrot1968fractional}
Mandelbrot, B.~B.; and Ness, J. W.~V. 1968.
\newblock Fractional Brownian Motions, Fractional Noises and Applications.
\newblock \emph{SIAM Review} 10(4): 422--437.

\bibitem[{Markowitz(1952)}]{markowitz1952portfolio}
Markowitz, H. 1952.
\newblock Portfolio selection.
\newblock \emph{The journal of finance} 7(1): 77--91.

\bibitem[{Mensi et~al.(2018)Mensi, Hamdi, Shahzad, Shafiullah, and
  Al-Yahyaee}]{mensi2018islamic}
Mensi, W.; Hamdi, A.; Shahzad, S. J.~H.; Shafiullah, M.; and Al-Yahyaee, K.~H.
  2018.
\newblock Modeling cross-correlations and efficiency of {Islamic} and
  conventional banks from {Saudi Arabia}: Evidence from {MF-DFA} and {MF-DXA}
  approaches.
\newblock \emph{Physica A: Statistical Mechanics and its Applications} 502(15):
  576--589.

\bibitem[{Metghalchi, Marcucci, and Chang(2012)}]{metghalchi2012moving}
Metghalchi, M.; Marcucci, J.; and Chang, Y.-H. 2012.
\newblock Are moving average trading rules profitable? Evidence from the
  European stock markets.
\newblock \emph{Applied Economics} 44(12): 1539--1559.

\bibitem[{Meucci(2009)}]{meucci2009diversification}
Meucci, A. 2009.
\newblock Managing Diversification.
\newblock \emph{Risk} 22(5): 74--79.

\bibitem[{Mitra(2009)}]{mitra2009optimal}
Mitra, S.~K. 2009.
\newblock Optimal combination of trading rules using neural networks.
\newblock \emph{International business research} 2(1): 86--99.

\bibitem[{Nakagawa, Uchida, and Aoshima(2018)}]{nakagawa2018deep}
Nakagawa, K.; Uchida, T.; and Aoshima, T. 2018.
\newblock Deep factor model.
\newblock In \emph{ECML PKDD 2018 Workshops}, 37--50. Springer.

\bibitem[{Neely et~al.(2014)Neely, Rapach, Tu, and Zhou}]{neely2014forecasting}
Neely, C.~J.; Rapach, D.~E.; Tu, J.; and Zhou, G. 2014.
\newblock Forecasting the equity risk premium: the role of technical
  indicators.
\newblock \emph{Management science} 60(7): 1772--1791.

\bibitem[{Pal and Mitra(1992)}]{pal1992multilayer}
Pal, S.~K.; and Mitra, S. 1992.
\newblock Multilayer perceptron, fuzzy sets, classifiaction.
\newblock \emph{IEEE Transactions on Neural Networks} 3(5): 683--697.

\bibitem[{Partovi and Caputo(2004)}]{partovi2004principal}
Partovi, M.~H.; and Caputo, M. 2004.
\newblock Principal portfolios: recasting the efficient frontier.
\newblock \emph{Economics Bulletin} 7(3): 1--10.

\bibitem[{Pasini(2017)}]{pasini2017principal}
Pasini, G. 2017.
\newblock Principal component analysis for stock portfolio management.
\newblock \emph{International Journal of Pure and Applied Mathematics} 115(1):
  153--167.

\bibitem[{Pedregosa et~al.(2011)Pedregosa, Varoquaux, Gramfort, Michel,
  Thirion, Grisel, Blondel, Prettenhofer, Weiss, Dubourg, Vanderplas, Passos,
  Cournapeau, Brucher, Perrot, and Duchesnay}]{scikit-learn}
Pedregosa, F.; Varoquaux, G.; Gramfort, A.; Michel, V.; Thirion, B.; Grisel,
  O.; Blondel, M.; Prettenhofer, P.; Weiss, R.; Dubourg, V.; Vanderplas, J.;
  Passos, A.; Cournapeau, D.; Brucher, M.; Perrot, M.; and Duchesnay, E. 2011.
\newblock Scikit-learn: Machine Learning in {P}ython.
\newblock \emph{Journal of Machine Learning Research} 12: 2825--2830.

\bibitem[{Peters(1994)}]{peters1994fractal}
Peters, E.~E. 1994.
\newblock \emph{Fractal market analysis: applying chaos theory to investment
  and economics}, volume~24.
\newblock John Wiley \& Sons.

\bibitem[{Poterba and Summers(1988)}]{poterba1988reversion}
Poterba, J.~M.; and Summers, L.~H. 1988.
\newblock Mean reversion in stock prices: Evidence and Implications.
\newblock \emph{Journal of Financial Economics} 22(1): 27--59.

\bibitem[{Shah(2007)}]{shah2007machine}
Shah, V.~H. 2007.
\newblock Machine learning techniques for stock prediction.
\newblock \emph{Foundations of Machine Learning| Spring} 1(1): 6--12.

\bibitem[{Sharpe(1964)}]{sharpe1964capital}
Sharpe, W.~F. 1964.
\newblock Capital asset prices: A theory of market equilibrium under conditions
  of risk.
\newblock \emph{The journal of finance} 19(3): 425--442.

\bibitem[{Sharpe(1994)}]{sharpe1994sharpe}
Sharpe, W.~F. 1994.
\newblock The sharpe ratio.
\newblock \emph{Journal of portfolio management} 21(1): 49--58.

\bibitem[{Sortino and Price(1994)}]{sortino1994performance}
Sortino, F.~A.; and Price, L.~N. 1994.
\newblock Performance measurement in a downside risk framework.
\newblock \emph{the Journal of Investing} 3(3): 59--64.

\bibitem[{Srivastava et~al.(2014)Srivastava, Hinton, Krizhevsky, Sutskever, and
  Salakhutdinov}]{srivastava14dropout}
Srivastava, N.; Hinton, G.; Krizhevsky, A.; Sutskever, I.; and Salakhutdinov,
  R. 2014.
\newblock Dropout: A Simple Way to Prevent Neural Networks from Overfitting.
\newblock \emph{Journal of Machine Learning Research} 15(56): 1929--1958.

\bibitem[{Szado(2009)}]{szado2009vix}
Szado, E. 2009.
\newblock VIX futures and options: A case study of portfolio diversification
  during the 2008 financial crisis.
\newblock \emph{The Journal of Alternative Investments} 12(2): 68--85.

\bibitem[{Wang et~al.(2019)Wang, Zhang, Tang, Wu, and
  Xiong}]{wang2019alphastock}
Wang, J.; Zhang, Y.; Tang, K.; Wu, J.; and Xiong, Z. 2019.
\newblock AlphaStock: A Buying-Winners-and-Selling-Losers Investment Strategy
  using Interpretable Deep Reinforcement Attention Networks.
\newblock In \emph{Proceedings of the 25th ACM SIGKDD International Conference
  on Knowledge Discovery \& Data Mining}, 1900--1908.

\bibitem[{Young(1991)}]{young1991calmar}
Young, T.~W. 1991.
\newblock Calmar ratio: A smoother tool.
\newblock \emph{Futures} 20(1): 40.

\bibitem[{Yu, Wang, and Samworth(2014)}]{Yu2014daviskahan}
Yu, Y.; Wang, T.; and Samworth, R.~J. 2014.
\newblock {A useful variant of the Davis–Kahan theorem for statisticians}.
\newblock \emph{Biometrika} 102(2): 315--323.

\bibitem[{Zhang, Aggarwal, and Qi(2017)}]{zhang2017stock}
Zhang, L.; Aggarwal, C.; and Qi, G.-J. 2017.
\newblock Stock price prediction via discovering multi-frequency trading
  patterns.
\newblock In \emph{Proceedings of the 23rd ACM SIGKDD international conference
  on knowledge discovery and data mining}, 2141--2149.

\end{thebibliography}

\newpage
\appendix
\section*{Appendix}

This is the supplementary material for the paper entitled ``Deep Portfolio Optimization via Distributional Prediction of Residual Factors''.

\section{Technical details of the proposed method}\label{sec:technical_details}

In this section, we provide some technical details for our proposed method in Section 3.

\subsection{Detailed calculation of portfolio}\label{sec:detailed-portfolio-calculation}
Let $\bm{\mu}_t$ be the expected return vector at time $t$, and $\bm{\Sigma}_t$ be the covariance matrix of returns. As we explained in Section 2.2, the (ideal) optimal portfolio can be written as
$
    \bm{b}_t^* = \lambda^{-1} \bm{\Sigma}_t^{-1} \bm{\mu}_t,
$
where $\lambda > 0$ is a predefined risk aversion parameter. Therefore, having estimators $\hat{\bm{\mu}}_t$ and $\hat{\bm{\Sigma}}_t$ for these population quantities, a natural construction of the portfolio is the following plug-in rule
\[
    \hat{\bm{b}}_t := \frac{1}{\lambda} \hat{\bm{\Sigma}}_t^{-1} \hat{\bm{\mu}}_t.
\]

In our proposed method, we construct a portfolio over the spectral residual $\tilde{\epsilon}_t$. For the estimators of the mean and the covariance, we use the quantile regression-based estimators explained in the previous subsection. In particular, we approximate the covariance matrix by a diagonal matrix $\diag(\hat{\sigma}_{t, 1}, \ldots, \hat{\sigma}_{t, S})$. See also Appendix \ref{sec:orthogonality} for empirical and theoretical justifications of this diagonal approximation. As a result, the weight for $j$-th residual factor is given as
\[
    b_{t, j}^{\mathrm{res}} := \frac{\hat{\mu}_{t, j}}{\lambda \hat{\sigma}_{t, j}},
\]
where $\hat{\mu}_{t, j}$ and $\hat{\sigma}_{t, j}$ are predicted means and variances given as \eqref{eq:mu_t} and \eqref{eq:sigma_t}, respectively.

We now have a portfolio $\bm{b}^\mathrm{res}_t$ defined for the spectral residuals.
Recall that the spectral residual is obtained as a linear transformation of the (centered) raw returns as $\tilde{\bm{\epsilon}}_t = \bm{A}_t \bm{r}_t$ (see Section 3.1).
To obtain a portfolio for the raw return $\bm{r}_t$, we use the relation $\bm{b}_t = \bm{A}_t^\top \bm{b}^\mathrm{res}_t$.

In our experiment, we apply a common transformation to the output of any trading strategy so that it becomes a zero-investment portfolio. Here, we explain the detail of this transformation. Let $\bm{b}_t$ be a given portfolio. Assume that $\bm{b}_t$ is not proportional to the all-one vector $\bm{1} = (1, \ldots, 1)^\top$. This also implies that $\bm{b}_t$ is not proportional to the uniform buy-and-hold strategy. To convert $\bm{b}_t$ to a zero-investment portfolio, we subtract the average $\bar{b}_t = \frac{1}{S} \sum_{i=1}^S b_{t, j}$ from every coordinate, and then renormalize the portfolio so that the sum of the absolute values is unity. The resulting portfolio is given as
\[
    \frac{\bm{b}_t - \bar{b}_t \bm{1}}{\norm{\bm{b}_t - \bar{b}_t \bm{1}}_1}.
\]
Note that the normalized portfolio defined in this way does not depend on the parameter $\lambda$.

\subsection{Fractal networks}

Here, we explain the detailed structure of the fractal network introduced in Section 3.3. Figure \ref{fig-fractal} illustrates the structure of the fractal network.

\begin{figure}[t]
  \centering
  \includegraphics[width=0.8\linewidth]{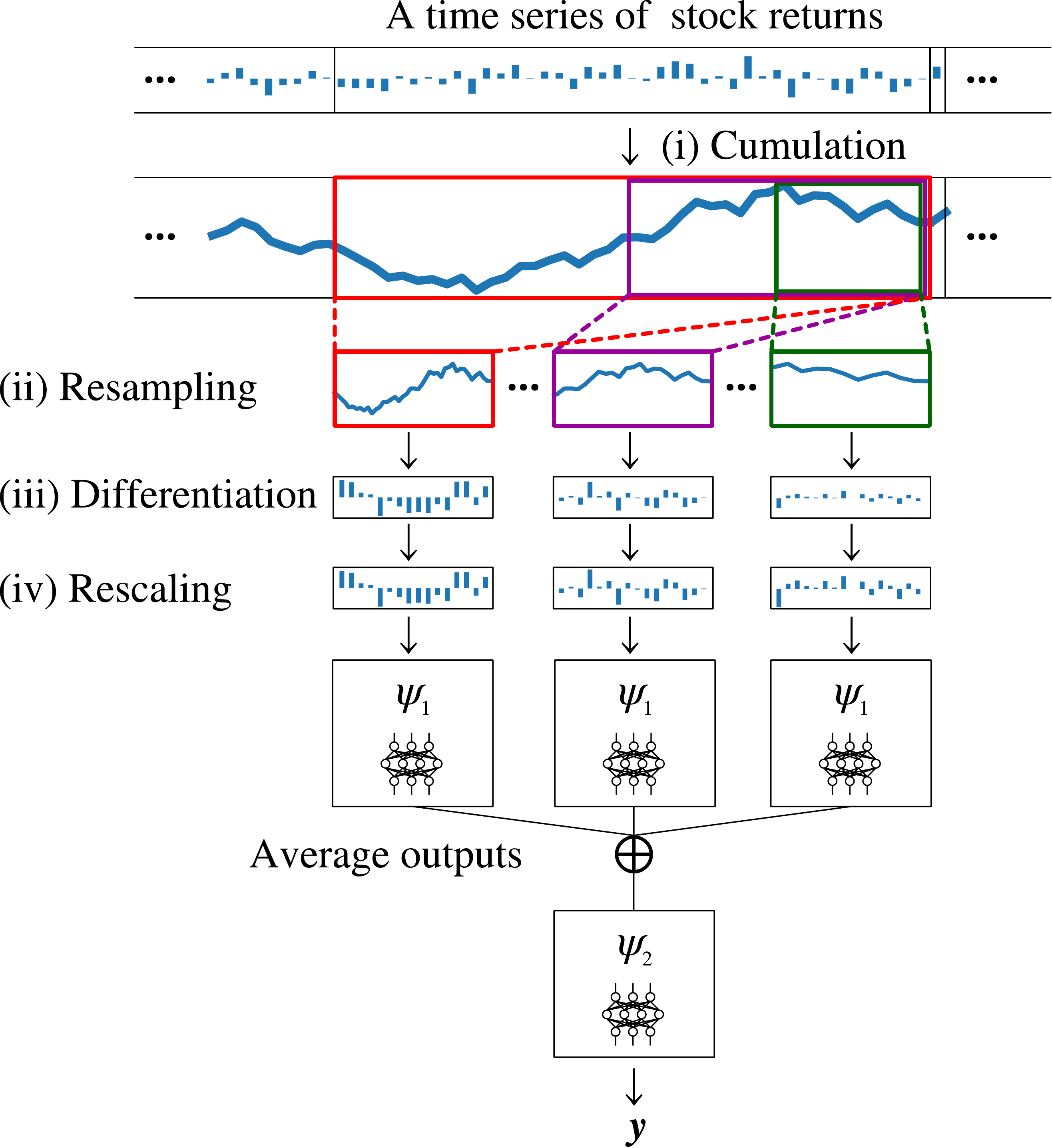}
  \caption{Illustration of the fractal network.}
  \label{fig-fractal}
\end{figure}

Let $\bm{x}$ denote the input of the network. First, the fractal network applies the resampling mechanism $\mathrm{Resample}(\bm{x}, \tau_i)$ for several scale parameters $1 = \tau_0 > \tau_1 > \cdots > \tau_L > 0$ to generate multiple views of $\bm{x}$ with different sampling rates.
To be precise, the resample mechanism outputs a sequence by the following four procedures:
\begin{enumerate}
    \renewcommand{\labelenumi}{(\roman{enumi})}
    \item \textbf{Cumulation.} First, given an input $\bm{x} = (x_1, \ldots, x_H)$, we calculate the cumulative sum $\bm{z} = (x_1, x_1 + x_2, \ldots, x_1 + \cdots + x_H)$. Since the input $\bm{x}$ corresponds to the residual factors of the returns, each coordinate variable $x_s$ is understood as the increment or the difference of stock prices of two adjacent time periods. Hence, we use its cumulative sum so that it corresponds to (logarithmic) stock prices and exhibits the fractal structure.
    \item \textbf{Resampling.} Second, given $0 < \tau \leq 1$ and $\bm{z}$, we generate a shorter sequence $\bm{z}_\tau$ of a fixed length $H' (< H)$ by downsampling from $(z_{\lfloor (1 - \tau) H \rfloor}, \ldots, z_H)$. In other words, the resulting sequence is a subsequence with length $\lceil \tau H \rceil$ located at the end of the original sequence.
    \item \textbf{Differentiation.} Third, we again take the first difference of the downsampled sequence $\bm{z}_\tau$ to obtain the corresponding return sequence.
    \item \textbf{Rescaling.} Finally, we normalize the output by multiplying the entire sequence by $\tau^{-1/2}$. On the choice of the multiplicative factor, see following Remark \ref{rmk:hurst} as well.
\end{enumerate}

Next, we apply a common non-linear transformation $\psi_1$ to these views. Finally, we take average of the results and apply another transformation $\psi_2$. The overall procedure is summarized in the following equation
\begin{equation}
    \psi(\bm{x})
    = \psi_2\left(
        \frac{1}{L} \sum_{i=1}^L
        \psi_1 (\mathrm{Resample}(\bm{x}, \tau_i))
    \right).
    \label{eq:app_fractal_network}
\end{equation}

To incorporate the volatility invariance (Section 3.3) into the fractal network, we want to make the overall transformation $\bm{x} \mapsto \psi(\bm{x})$ positive homegeneous. To do this, it suffices to ensure that non-linear transformations $\psi_1$ and $\psi_2$ are positive homogeneous. This can be proved by combining the following claims.

\begin{lem}
Let $f_1, f_2, \ldots, f_L$ be any positive homogeneous functions.
\begin{enumerate}
    \renewcommand{\labelenumi}{(\roman{enumi})}
    \item Any linear transformation is positive homogeneous.
    \item Suppose the composition $f_1 \circ f_2$ can be defined. Then, $f_1 \circ f_2$ is positive homogeneous.
    \item The concatenation $\bm{x} \mapsto (f_1(\bm{x})^\top, \ldots, f_L(\bm{x})^\top)^\top$ is positive homogeneous.
    \item The average $\bm{x} \mapsto \frac{1}{L} \sum_{i=1}^L f_i(\bm{x})$ is positive homogeneous.
    \item The resampling mechanism $\bm{x} \mapsto \mathrm{Resample}(\bm{x}, \tau)$ is positive homogeneous for any $0 < \tau \leq 1$.
\end{enumerate}
\end{lem}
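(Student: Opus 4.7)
The plan is to verify each clause directly from the definition $f(a\bm{x}) = a f(\bm{x})$ for all $\bm{x}$ and all $a > 0$. Parts (i)--(iv) reduce to one-line algebraic manipulations, while part (v) will be handled by decomposing $\mathrm{Resample}(\cdot, \tau)$ into a pipeline of elementary operations and invoking the earlier clauses.

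For clause (i), positive homogeneity follows from (actually, is weaker than) the ordinary linearity identity $L(a\bm{x}) = a L(\bm{x})$, which holds for every scalar $a$, not only positive ones. For clause (ii), I would write $(f_1 \circ f_2)(a\bm{x}) = f_1(f_2(a\bm{x})) = f_1(a f_2(\bm{x})) = a f_1(f_2(\bm{x}))$, using positive homogeneity of $f_2$ in the second equality and of $f_1$ in the third. Clause (iii) is immediate: evaluating the concatenation at $a\bm{x}$ multiplies each block $f_i(\bm{x})$ by $a$, and stacking these blocks is itself a scalar-preserving operation. Clause (iv) combines (iii) with the fact that the average, $\bm{y} \mapsto \frac{1}{L}\sum_i \bm{y}_i$, is linear, so by (ii) applied with the linear average as outer map (and invoking (i)), the overall transformation is positive homogeneous.

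Clause (v) is the substantive one, and my plan is to expose the four sub-steps of the resampling mechanism as in the definition and observe that each one is positive homogeneous, whence (ii) yields the result for the composition. Specifically, the cumulative sum $\bm{x} \mapsto (x_1, x_1+x_2, \ldots, x_1 + \cdots + x_H)$ is linear; the subsampling $\bm{z} \mapsto (z_{\lfloor(1-\tau)H\rfloor}, \ldots, z_H)$ followed by selection of a length-$\lceil \tau H \rceil$ subsequence is a coordinate projection, hence linear; the first-difference operator producing the return sequence from the downsampled cumulative path is linear; and the final rescaling by $\tau^{-1/2}$ is a scalar multiplication, also linear. Each of these is therefore positive homogeneous by (i), and their composition is positive homogeneous by iterated application of (ii).

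There is no genuine obstacle here; the only care needed is bookkeeping in (v) to confirm that the four sub-steps as described really are linear (in particular, that the truncation indices $\lfloor(1-\tau)H\rfloor$ and $\lceil \tau H \rceil$ are determined by $\tau$ and $H$ alone and not by the magnitude of $\bm{x}$, so that subsampling is indeed an $\bm{x}$-independent linear projection). Once this is noted, the proof is a sequence of one-line verifications, and the lemma follows.
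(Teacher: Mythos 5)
Your proof is correct and follows essentially the same route as the paper: clauses (i)--(iii) by direct verification, (iv) as a combination of the earlier clauses, and (v) by noting that each of the four sub-steps of the resampling mechanism (cumulation, subsampling, differentiation, rescaling) is linear, hence the composition is linear and therefore positive homogeneous. Your additional remark that the truncation indices depend only on $\tau$ and $H$, not on $\bm{x}$, is a sensible bookkeeping check but does not change the argument.
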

\begin{proof}
Since (i), (ii), and (iii) are almost obvious, we omit their proofs. (iv) is derived from (i), (ii), and (iii).
As for (v), the resampling mechanism is a linear transformation, and thus it is positive homogeneous. In fact, it is easy to see that the four operations in the resampling mechanism (i.e., cumulation, resampling, differentiation, and rescaling) are linear. Therefore, the resampling mechanism is also linear.
\end{proof}

\begin{rmk}\label{rmk:hurst}
In the rescaling phase, the choice of multiplicative factor $\tau^{-1/2}$ is sensible because of the following reason. If the underlying law is the fractional Brownian motion with the Hurst index $\mcH$, the appropriate scaling factor determined by its self-similarity is $\tau^{-\mcH}$ \cite{mandelbrot1968fractional}. It has been reported in \cite{kristoufek2014efficiency} that, in many real-world markets, estimated Hurst indices are approximately $\mcH \approx 0.5$, which implies that stock prices exhibit similar self-similarity properties as the standard Brownian motion.
\end{rmk}

\begin{rmk}
In some paper, the term ``fractal property'' stands for the behavior of stochastic processes captured by the fractional Brownian motion with $\mcH \neq 1/2$, which can lead inefficiency of the market. In this paper, however, we focus on the self-similarity of processes to design the network architecture. The self-similarity can be observed even in efficient markets with $\mcH \approx 0.5$.
\end{rmk}

\section{Theoretical analysis}\label{sec:theoretical}

In this section, we provide theoretical analyses of the spectral residuals justifying that the spectral residuals can hedge out the market factors. In Section \ref{sec:theory_isotropic}, we prove Proposition 1, which shows that the spectral residuals are actually uncorrelated to the market factors when the true residual factors are isotropic.
In practice, the covariance of the residual factors can be approximated by diagonal matrices. In Section \ref{sec:orthogonality}, we explain why this approximation is justified.
In Section \ref{sec:theory_anisotropic}, we investigate a more general situation where the true residual factors are anisotropic, i.e., the variances of the coordinate variables differ.

Before proceeding, let us recall the definition of the spectral residual. Let $\bm{r}$ be a zero-mean random vector with a covariance matrix $\bm{\Sigma} \in \RR^{S \times S}$. Let $\bm{\Sigma} = \bm{V} \bm{\Lambda} \bm{V}^\top$ is an eigendecomposition of $\bm{\Sigma}$, where $\bm{V} = [\bm{v}_1, \ldots, \bm{v}_S]$ is an orthogonal matriix and $\bm{\Lambda} = \diag(\lambda_1, \ldots, \lambda_S)$ is a diagonal matrix of the eigenvalues. Assume that $\lambda_1, \ldots, \lambda_S$ are sorted in descending order, i.e., $\lambda_1 \geq \cdots \geq \lambda_S$. Given an integer $1 \leq C < S$, let
\[
    \bm{V}_{\mathrm{res}} = [\bm{v}_{C + 1}, \ldots, \bm{v}_S]
\]
be an $S \times (S - C)$ matrix that consists of the eigenvectors that correspond to the smallest $S-C$ eigenvalues\footnote{
    In general, $\bm{V}_{\mathrm{res}}$ is not unique because there can be multiple eigenvectors having the same eigenvalues. For simplicity, we assume that $\lambda_{C} > \lambda_{C + 1}$ so that $\bm{V}_{\mathrm{res}}$ is uniquely determined.
}.
We also define a matrix $\bm{A}_\mathrm{res}$ as
\[
    \bm{A}_\mathrm{res} = \bm{V}_{\mathrm{res}} \bm{V}_{\mathrm{res}}^\top.
\]
Note that $\bm{A}_\mathrm{res}$ is the orthogonal projection onto the space spanned by the principal portfolios with the smallest $S-C$ eigenvalues. Then, we define the spectral residual as
\begin{equation}
    \tilde{\bm{\epsilon}} = \bm{A}_\mathrm{res} \bm{r}.
    \label{eq:spectral_residual_alt}
\end{equation}

\subsection{Isotropic residuals}\label{sec:theory_isotropic}

First, we prove the following result, which corresponds to Proposition \ref{prop:uncorrelated} in the main body. 

\begin{prop}[Proposition 1 in the main body, restated]\label{prop:uncorrelated_appendix}
Let $\bm{r}$ be a random vector in $\RR^S$ generated according to a linear model
$
    \bm{r} = \bm{B} \bm{f} + \bm{\epsilon},
$
where $\bm{B}$ is an $S \times C$ matrix of full column rank, and $\bm{f} \in \RR^{C}$ and $\bm{\epsilon} \in \RR^{S}$ are zero-mean random vectors.
Assume that the following conditions hold:
\begin{itemize}
    \item $\Var(f_i) = 1$ and $\Var(\epsilon_k) = \sigma > 0$.
    \item The coordinate variables in $\bm{f}$ and $\bm{\epsilon}$ are uncorrelated, that is, $\EE[f_i f_j] = 0$, $\EE[\epsilon_k \epsilon_\ell] = 0$, and $\EE[f_i \epsilon_k]  = 0$ hold for any $i \neq j$ and $k \neq \ell$.
\end{itemize}
Then, we have the followings.
\begin{enumerate}[label=(\roman*)]
    \item The spectral residual $\tilde{\bm{\epsilon}}$ defined in \eqref{eq:spectral_residual_alt} is uncorrelated from the common factor $\bm{f}$.
    \item The covariance matrix of $\tilde{\bm{\epsilon}}$ is given as $\sigma^2 \bm{A}_\mathrm{res}$. Under a suitable assumption, this can be approximated as a diagonal matrix.
\end{enumerate}
\end{prop}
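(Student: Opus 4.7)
The plan is to reduce both claims to a single structural identity coming from the eigendecomposition of $\bm{\Sigma} = \mathrm{Cov}(\bm{r})$. Under the stated assumptions one has $\bm{\Sigma} = \bm{B}\bm{B}^\top + \sigma^2 \bm{I}_S$. The key observation is that $\bm{\Sigma}$ and $\bm{B}\bm{B}^\top$ share the same eigenvectors, only with a uniform shift of $\sigma^2$ in the eigenvalues. Because $\bm{B}$ has full column rank $C$, exactly $C$ of the eigenvalues of $\bm{B}\bm{B}^\top$ are strictly positive and the remaining $S-C$ are zero, so when we order the eigenvalues of $\bm{\Sigma}$ in descending order, the top $C$ eigenvectors span $\mathrm{Col}(\bm{B})$ and the bottom $S-C$ span its orthogonal complement. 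Hence the columns of $\bm{V}_{\mathrm{res}}$ form an orthonormal basis of $\mathrm{Col}(\bm{B})^\perp$, and the clean identity $\bm{A}_{\mathrm{res}}\bm{B} = \bm{0}$ follows.

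With this identity in hand, parts (i) and the closed-form half of (ii) are immediate. Substituting the factor model into the definition of the spectral residual gives $\tilde{\bm{\epsilon}} = \bm{A}_{\mathrm{res}}(\bm{B}\bm{f} + \bm{\epsilon}) = \bm{A}_{\mathrm{res}}\bm{\epsilon}$, so the cross-covariance $\mathrm{Cov}(\tilde{\bm{\epsilon}}, \bm{f}) = \bm{A}_{\mathrm{res}} \, \EE[\bm{\epsilon}\bm{f}^\top] = \bm{0}$ by the assumed uncorrelatedness of $\bm{\epsilon}$ and $\bm{f}$, proving (i). For the covariance in (ii), I use the same identity together with the isotropy of $\bm{\epsilon}$ and the fact that $\bm{A}_{\mathrm{res}}$ is an orthogonal projection (symmetric and idempotent), to obtain $\mathrm{Cov}(\tilde{\bm{\epsilon}}) = \bm{A}_{\mathrm{res}} (\sigma^2 \bm{I}_S) \bm{A}_{\mathrm{res}}^\top = \sigma^2 \bm{A}_{\mathrm{res}}$.

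For the diagonal-approximation half of (ii), I would write $\bm{A}_{\mathrm{res}} = \bm{I}_S - \bm{P}$ where $\bm{P}$ is the orthogonal projection onto $\mathrm{Col}(\bm{B})$. The diagonal entries of $\bm{A}_{\mathrm{res}}$ are $1 - \|\bm{P}\bm{e}_i\|^2$ and the off-diagonal entries are $-\bm{e}_i^\top \bm{P} \bm{e}_j$. A natural sufficient condition is that the loading matrix $\bm{B}$ is \emph{spread} across stocks in the sense that no coordinate has excessive loading on the factor subspace; quantitatively, if the row norms of an orthonormal basis of $\mathrm{Col}(\bm{B})$ are of uniform order $\sqrt{C/S}$, then the off-diagonals are $O(C/S)$ in magnitude and are negligible compared to the diagonals whenever $C \ll S$. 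I would state this precisely and derive a bound of the form $\|\bm{A}_{\mathrm{res}} - \diag(\bm{A}_{\mathrm{res}})\|_\infty \leq \delta$ under this quantitative incoherence condition on $\bm{B}$.

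The only non-routine step is the last one: phrasing the ``suitable assumption'' so that it is simultaneously (a) empirically plausible for equity markets, where the top principal components are known to load broadly on essentially every stock, and (b) strong enough to yield a clean quantitative diagonalization bound. Parts (i) and the identity $\mathrm{Cov}(\tilde{\bm{\epsilon}}) = \sigma^2 \bm{A}_{\mathrm{res}}$ are one-line consequences of $\bm{A}_{\mathrm{res}}\bm{B} = \bm{0}$, so the substantive content of the proof lies in choosing the right incoherence/spread hypothesis for the diagonal approximation.
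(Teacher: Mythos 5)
Your proposal is correct and follows essentially the same route as the paper: diagonalize $\bm{\Sigma} = \bm{B}\bm{B}^\top + \sigma^2\bm{I}$ simultaneously with $\bm{B}\bm{B}^\top$, conclude $\bm{A}_{\mathrm{res}}\bm{B} = \bm{0}$ so that $\tilde{\bm{\epsilon}} = \bm{A}_{\mathrm{res}}\bm{\epsilon}$, and read off both the uncorrelatedness and the covariance $\sigma^2\bm{A}_{\mathrm{res}}$. The only slight divergence is in formalizing the ``suitable assumption'' for the diagonal approximation: your incoherence condition corresponds to the paper's $\delta$-spreading case, whereas the paper additionally allows $\delta$-spiked principal portfolios (factors concentrated on a single stock), which yields the same $O(C\delta/S)$ off-diagonal bound under a mildly broader hypothesis.
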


\begin{proof}
Let $\bm{B} \bm{B}^\top = \bm{V}_{\mathrm{mar}} \bm{\Lambda} \bm{V}_{\mathrm{mar}}^\top$ be the eigendecomposition of the market factor covariance $\bm{B} \bm{B}^\top$, where $\bm{V}_{\mathrm{mar}} = [\bm{v}_1, \ldots, \bm{v}_S]$ is an orthogonal matrix and $\bm{\Lambda}$ is a diagonal matrix of the eigenvalues. Since $\mathord{\mathrm{rank}}\bm{B} \bm{B}^\top \leq C$, we can write
\[
    \bm{\Lambda} = \diag(\lambda_1, \ldots, \lambda_C, 
    \underbrace{0, \ldots, 0}_{S - C}).
\]
Note that $\lambda_i >  0$ for all $i \in \set{1, \ldots, C}$ since $\bm{B} \bm{B}^\top$ is positively semi-definite and has rank $C$.
Since coordinate variables in $\bm{f}$ and $\bm{\epsilon}$ are uncorrelated, the covariance matrix of $\bm{r}$ is calculated as
\[
    \bm{\Sigma}
    = \bm{B} \bm{B}^\top + \sigma^2 \bm{I}
    = \bm{V}_{\mathrm{mar}}(\bm{\Lambda} + \sigma^2 \bm{I}) \bm{V}_{\mathrm{mar}}^\top.
\]
This gives the eigendecomposition of $\bm{\Sigma}$, and its eigenvalues are given as
\[
    \underbrace{\lambda_1 + \sigma^2, \ldots, \lambda_C + \sigma^2}_{C},
    \underbrace{\sigma^2, \ldots, \sigma^2}_{S-C}.
\]
In particular, the projection matrix for the spectral residual is given as
\begin{equation}
    \bm{A}_{\mathrm{res}} =
    [\bm{v}_{C + 1}, \ldots, \bm{v}_S] [\bm{v}_{C+1}, \ldots, \bm{v}_S]^\top.
    \label{eq:ares_construction}
\end{equation}
Consequently, we have $\tilde{\bm{\epsilon}} =  \bm{A}_{\mathrm{res}} (\bm{B} \bm{f} + \bm{\epsilon}) = \bm{A}_{\mathrm{res}} \bm{\epsilon}$, which is uncorrelated from $\bm{B} \bm{f}$ (thus we proved (i)).

Moreover, the covariance matrix of $\bm{A}_{\mathrm{res}} \bm{\epsilon}$ is given as $\sigma^2 \bm{A}_{\mathrm{res}} \bm{A}_{\mathrm{res}}^\top = \sigma^2 \bm{A}_{\mathrm{res}}$. Regarding the diagonal approximation of this matrix, see the next subsection for details.
\end{proof}

\subsection{On the near-orthogonality of spectral residuals}\label{sec:orthogonality}

When we construct a portfolio over the spectral residuals, we use an diagonal approximation of the covariance matrix (see Section 3.2 and Appendix \ref{sec:detailed-portfolio-calculation}). As mentioned in the previous subsection, this covariance matrix is proportional to $\bm{A}_\mathrm{res}$. Hence, to justify the diagonal approximation, we should show that $\bm{A}_\mathrm{res}$ becomes nearly diagonal.

\begin{figure*}
    \centering
    \includegraphics[width=1.0\linewidth]{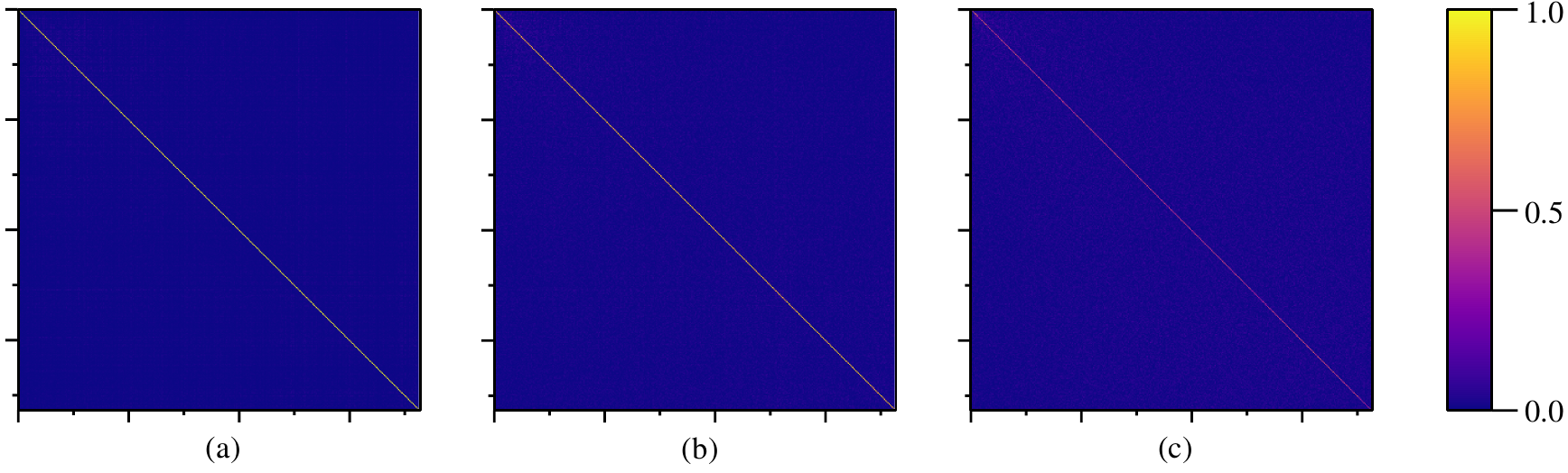}
    \caption{The absolute values of projection matrix $\bm{A}_\mathrm{res}$ calculated from U.S.~market data.
    Three panels show results for (a) $C = 10$, (b) $C = 50$, and (c) $C = 200$, respectively.
    In all cases, the projection matrices are reasonably close to diagonal matrices.
    }
    \label{fig-ac-visualization}
\end{figure*}

Figure \ref{fig-ac-visualization} shows examples of $\bm{A}_\mathrm{res}$ calculated emprically from U.S. market data. For several choices of the parameter $C$, these matrices are reasonably close to diagonal matrices.

But why does this happen? Generally speaking, a projection matrix constructed as \eqref{eq:ares_construction} is not necessarily approximated well by a diagonal matrix. However, in financial markets, we may assume that the following properties hold for the first several principal portfolios, which may justify the diagonal approximation.

\begin{itemize}
    \item \textbf{Well-spreading factors}: In a financial market, a large proportion of a stock return is often explained by a small number of common market factors. A market factor may be almost equally shared by all the stocks in the market. For example, the first principal portfolio can be seen as the factor of the overall market, which may be close to the ``equal weighting index'' of stocks. In this case, removing the market factor changes all the elements of the covariance matrix only very slightly.
    \item \textbf{Spiked factors}: Suppose that, in an investment horizon, a certain company's stock price is affected by a factor that is independent from any other market factors. In this situation, there can be a principal portfolio that arises only from a single stock. Removing this factor is just equivalent to removing the corresponding stock, which does not affect the off-diagonal elements of the covariance matrix.
\end{itemize}

To formulate the above idea, we introduce the following two notions.

\begin{definition}
Let $\bm{v} \in \RR^S$ be a unit vector (i.e., $\norm{\bm{v}}_2 = 1$). Let $\delta > 0$ be a positive number.
\begin{itemize}
    \item We say that $\bm{v}$ is \textit{$\delta$-spreading} if $|v_i| \leq \sqrt{\delta / S}$ for all $i \in \set{1, \ldots, S}$.
    \item We say that $\bm{v}$ is \textit{$\delta$-spiked} if there exists $i^* \in \set{1, \ldots, S}$ such that $|v_i| \leq \delta/S$ for all $i \neq i^*$.
\end{itemize}
\end{definition}

The above properties ensure that the off-diagonal elements of $\bm{v} \bm{v}^\top$ are sufficiently small.

\begin{prop}
Suppose that a unit vector $\bm{v} \in \RR^S$ is either $\delta$-spreading or $\delta$-spiked for some $\delta > 0$. Then, for all $i \neq j$, the $(i, j)$ element of $\bm{P} = \bm{v} \bm{v}^\top$ is bounded as $|P_{ij}| = |v_i v_j| \leq \delta/S$.
\end{prop}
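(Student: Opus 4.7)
The plan is to split into the two cases given by the hypothesis and verify the bound $|v_i v_j| \leq \delta/S$ directly in each. Both cases will be short; the only input I need besides the definitions is the unit-norm constraint $\norm{\bm{v}}_2 = 1$, which in particular implies $|v_k| \leq 1$ for every coordinate $k$.

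First I would handle the $\delta$-spreading case. By definition every coordinate satisfies $|v_k| \leq \sqrt{\delta/S}$, so for any $i \neq j$ I simply multiply the two bounds to get $|v_i v_j| \leq \sqrt{\delta/S}\cdot\sqrt{\delta/S} = \delta/S$, which is the desired inequality. This case is essentially immediate from the definition and requires no use of $i \neq j$.

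For the $\delta$-spiked case, the point is that for distinct indices $i \neq j$ at most one of them can equal the spike index $i^*$. Hence at least one of $|v_i|, |v_j|$ is bounded by $\delta/S$; say without loss of generality $|v_i| \leq \delta/S$. Combining this with the trivial bound $|v_j| \leq \norm{\bm{v}}_2 = 1$ gives $|v_i v_j| \leq (\delta/S)\cdot 1 = \delta/S$, completing this case.

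The proposition is essentially a bookkeeping statement, so there is no substantive obstacle — the only subtlety worth flagging is that in the spiked case one must explicitly use the hypothesis $i \neq j$ to guarantee that at least one of the two indices is not the spike $i^*$; without that observation the bound could fail on the diagonal entry $(i^*, i^*)$. Since the claim is stated only for off-diagonal entries, this is exactly the regime where the argument applies.
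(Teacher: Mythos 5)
Your proof is correct and follows exactly the same two-case argument as the paper: in the spreading case multiply the coordinate bounds to get $\delta/S$, and in the spiked case use that for $i \neq j$ at least one index differs from the spike $i^*$, so one factor is at most $\delta/S$ and the other at most $1$. Your write-up is just a more explicit version of the paper's one-line proof.
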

\begin{proof}
The conclusion is obvious if $\bm{v}$ is $\delta$-spreading. If $\bm{v}$ is $\delta$-spiked, we have $|v_i v_j| \leq 1 \cdot \delta / S = \delta/S$ for any $i \neq j$.
\end{proof}

Now, let us consider the linear factor model $\bm{r} = \bm{B} \bm{f} + \bm{\epsilon}$. We assume a similar condition as in Proposition \ref{prop:uncorrelated_appendix}. In particular, we assume that $\bm{\epsilon}$ has an isotropic covariance matrix $\sigma^2 \bm{I}$. Our goal is to show that the spectral residual $\bm{A}_\mathrm{res} \bm{r}$ has a nearly isotropic covariance matrix under a suitable condition.

Let $\bm{B} \bm{B}^\top = \bm{V}_\mathrm{mar} \bm{\Lambda} \bm{V}_\mathrm{mar}^\top$ be an eigendecomposition of the market factor, where $\bm{V}$ is an orthogonal matrix and $\bm{\Lambda} = \diag(\lambda_1, \ldots, \lambda_C, 0, \ldots, 0)$ with $\lambda_1 \geq \cdots \geq \lambda_C > 0$. Then, as in Proposition \ref{prop:uncorrelated_appendix}, we have
\begin{align*}
    \bm{A}_\mathrm{res}
    &= [\bm{v}_{C+1}, \ldots, \bm{v}_S] [\bm{v}_{C+1}, \ldots, \bm{v}_S]^\top \\
    &= \bm{I} - \bm{P}_\mathrm{mar},
\end{align*}
and the spectral residual is given as $\bm{A}_\mathrm{res} \bm{r} = \bm{A}_\mathrm{res} \bm{\epsilon}$.
Here,
\[
    \bm{P}_\mathrm{mar}
    = [\bm{v}_{1}, \ldots, \bm{v}_C] [\bm{v}_{1}, \ldots, \bm{v}_C]^\top
\]
is the projection matrix onto the space of the market factors.
Thus, the covariance matrix of the spectral residual is given as $\sigma^2 \bm{A}_\mathrm{res}$. The following proposition explains that this covariance matrix is nearly diagonal if the top-$C$ principal portfolios are either ``well-spreading'' or ``nearly spiked''.

\begin{prop}
We use similar notations and assumptions as above. Suppose that each $\bm{v}_i$ ($i \in \set{1, \ldots, C}$) is either $\delta$-spreading or $\delta$-spiked. Then, the off-diagonal elements of $\bm{A}_\mathrm{res}$ are bounded as
\[
    |[\bm{A}_\mathrm{res}]_{i,j}| \leq \frac{C \delta}{S}
\]
for any $i \neq j$.
\end{prop}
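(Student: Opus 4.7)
The plan is to exploit the decomposition $\bm{A}_\mathrm{res} = \bm{I} - \bm{P}_\mathrm{mar}$ already established in the discussion preceding the proposition, together with the spectral representation $\bm{P}_\mathrm{mar} = \sum_{k=1}^C \bm{v}_k \bm{v}_k^\top$. Since we only care about off-diagonal entries ($i \neq j$), the identity $\bm{I}$ contributes nothing, and the bound on $|[\bm{A}_\mathrm{res}]_{i,j}|$ reduces to a bound on $|[\bm{P}_\mathrm{mar}]_{i,j}|$.

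First, I would write out
\[
    [\bm{A}_\mathrm{res}]_{i,j} = -[\bm{P}_\mathrm{mar}]_{i,j} = -\sum_{k=1}^C v_{k,i}\, v_{k,j}
\]
for $i \neq j$, and apply the triangle inequality to obtain
\[
    \bigl|[\bm{A}_\mathrm{res}]_{i,j}\bigr| \;\le\; \sum_{k=1}^C \bigl|v_{k,i}\, v_{k,j}\bigr|.
\]
Next, I would invoke the preceding proposition on rank-one projections: since each unit vector $\bm{v}_k$ is either $\delta$-spreading or $\delta$-spiked by hypothesis, that proposition directly gives $|v_{k,i} v_{k,j}| \le \delta/S$ for every $i \neq j$ and every $k \in \{1, \ldots, C\}$. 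Summing this uniform bound over $k = 1, \ldots, C$ yields the claimed inequality $|[\bm{A}_\mathrm{res}]_{i,j}| \le C\delta/S$.

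I do not anticipate a real obstacle here; the proposition is essentially a bookkeeping consequence of the earlier rank-one bound combined with the fact that $\bm{A}_\mathrm{res}$ and $\bm{P}_\mathrm{mar}$ are complementary projections, so their off-diagonal entries coincide up to sign. The only subtle point worth being careful about is to make sure one states the bound for each principal direction $\bm{v}_k$ separately (the assumption is \emph{pointwise} in $k$, not uniform across $k$ with a single property), and then sums $C$ such bounds. It may be worth remarking that the bound degrades linearly in $C$, which is consistent with the empirical observation in Figure~\ref{fig-ac-visualization} that the diagonal approximation deteriorates as $C$ grows.
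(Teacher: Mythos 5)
Your proposal is correct and follows essentially the same route as the paper's own proof: both use $\bm{A}_\mathrm{res} = \bm{I} - \bm{P}_\mathrm{mar}$ so that the off-diagonal entries coincide up to sign, expand $\bm{P}_\mathrm{mar} = \sum_{k=1}^C \bm{v}_k \bm{v}_k^\top$, apply the preceding rank-one bound $|v_{k,i} v_{k,j}| \leq \delta/S$ to each term, and sum over $k$. No gaps; your remark about treating the spreading/spiked hypothesis separately for each $\bm{v}_k$ is a fair point of care but matches what the paper implicitly does.
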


\begin{proof}
Combining the equality $\bm{P}_\mathrm{mar} = \sum_{i=1}^C \bm{v}_i \bm{v}_i^\top$ and the fact that all the absolute values of the off-diagonal elements in $\bm{v}_i \bm{v}_i^\top$ are not larger than $\delta/S$, we have $|[\bm{A}_\mathrm{res}]_{i,j}| = |[\bm{P}_\mathrm{mar}]_{i,j}| \leq C \delta / S$ for any $i \neq j$. Hence the result. 
\end{proof}

\subsection{Anisotropic residuals}\label{sec:theory_anisotropic}

Propositon \ref{prop:uncorrelated_appendix} does not hold when the true residual factor $\epsilon_1, \ldots, \epsilon_S$ are anisotropic. However, we can show that the spectral residual is almost uncorrelated from the market factors if
\begin{itemize}
    \item the volatilities of market factors are sufficiently larger than the residual factors, and
    \item the residual factors are almost isotropic.
\end{itemize}
Here, we provide a quantitative justification of this claim by using matrix perturbation theory \cite{davis1970, Yu2014daviskahan}.

Before describing the result, we provide some notations. For any $S \times S$ matrix $\bm{A}$, $\norm{\bm{A}}_\mathrm{F}$ is the Frobenius norm defined as $\norm{\bm{A}}_\mathrm{F}^2 = \sum_{i, j} a_{ij}^2$. For any orthogonal projection matrix $\bm{P}$, let $\mcV(\bm{P})$ denote the corresponding linear subspace (i.e., the largest linear subspace that is invariant under $\bm{P}$).
Let $\bm{P}_1$ and $\bm{P}_2$ be any two orthogonal projection matrices having the same rank, and $\mcV_i = \mcV(\bm{P}_i)$ ($i = 1, 2$). Then, the quantity
\begin{align*}
    \norm{\sin \Theta(\mcV_1, \mcV_2)}_\mathrm{F}
    := \norm{\bm{P}_1 (\bm{I} - \bm{P}_2)}_\mathrm{F}
    = \norm{\bm{P}_2 (\bm{I} - \bm{P}_1)}_\mathrm{F}
\end{align*}
measures the ``principal angles'' between two subspaces $\mcV_1$ and $\mcV_2$ \cite{davis1970}. In particular, this quantity becomes zero if and only if the two subspaces agree.

\begin{prop}
Let $\bm{r}$ be a random vector in $\RR^S$ generated according to a linear model
$
    \bm{r} = \bm{B} \bm{f} + \bm{\epsilon},
$
where $\bm{B}$ is an $S \times C$ matrix of full rank, and $\bm{f} \in \RR^{C}$ and $\bm{\epsilon} \in \RR^{S}$ are zero-mean random vectors.
Assume that the following conditions hold:
\begin{itemize}
    \item $\Var(f_i) = 1$ and $\Var(\epsilon_k) = \sigma_k^2 > 0$ for $k \in \set{1, \ldots, S}$.
    \item The coordinate variables in $\bm{f}$ and $\bm{\epsilon}$ are uncorrelated, that is, $\EE[f_i f_j] = 0$, $\EE[\epsilon_k \epsilon_\ell] = 0$, and $\EE[f_i \epsilon_k]  = 0$ hold for any $i \neq j$ and $k \neq \ell$.
\end{itemize}
Let $\bm{P}_\mathrm{mar}$ be the orthogonal projection matrix onto the linear space spanned by $\bm{B}$, and let $\bm{A}_\mathrm{res}$ be the projection matrix of the spectral residuals.
Let $\lambda_\mathrm{min}$ be the smallest positive eigenvalue of $\bm{B} \bm{B}^\top$.
Then, we have
\begin{equation}
    \norm{\bm{P}_\mathrm{mar}\bm{A}_\mathrm{res}}_\mathrm{F}
    \leq \frac{2 \sqrt{S}(\max_{i} \sigma_i^2 - \min_i \sigma_i^2)}{\lambda_\mathrm{min}}.
    \label{eq:anisotropic_result}
\end{equation}
\end{prop}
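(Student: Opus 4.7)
The plan is to recast the statement as a matrix perturbation problem and apply the Davis--Kahan $\sin\Theta$ theorem in its Frobenius-norm form. Under the modeling assumptions, the covariance of $\bm{r}$ equals $\bm{\Sigma} = \bm{B}\bm{B}^\top + \bm{D}$ with $\bm{D} = \diag(\sigma_1^2,\ldots,\sigma_S^2)$. If $\bm{D}$ were a multiple of $\bm{I}$ (the isotropic case already handled in Proposition~\ref{prop:uncorrelated_appendix}), then $\bm{\Sigma}$ and $\bm{B}\bm{B}^\top$ would share their top-$C$ eigenspace exactly, so $\bm{P}_\mathrm{mar}\bm{A}_\mathrm{res} = \bm{0}$. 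The anisotropic deviation from this ideal is therefore a perturbation of size controlled by $\max_i \sigma_i^2 - \min_i \sigma_i^2$.

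First I would write $\bm{\Sigma} = \bm{M}_0 + \bm{E}$ with $\bm{M}_0 := \bm{B}\bm{B}^\top + \sigma_\mathrm{min}^2 \bm{I}$ and $\bm{E} := \bm{D} - \sigma_\mathrm{min}^2 \bm{I}$, where $\sigma_\mathrm{min}^2 := \min_i \sigma_i^2$. Because $\bm{M}_0$ differs from $\bm{B}\bm{B}^\top$ only by a multiple of the identity, its top-$C$ eigenspace coincides with the column space of $\bm{B}$, i.e.\ $\mcV(\bm{P}_\mathrm{mar})$, with eigenvalues $\lambda_1 + \sigma_\mathrm{min}^2 \geq \cdots \geq \lambda_C + \sigma_\mathrm{min}^2$, while the remaining $S-C$ eigenvalues all equal $\sigma_\mathrm{min}^2$. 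In particular the spectral gap between the $C$-th and $(C+1)$-th eigenvalues of $\bm{M}_0$ is exactly $\lambda_\mathrm{min}$. On the perturbed side, the top-$C$ eigenspace of $\bm{\Sigma}$ is by definition $\mcV(\bm{I} - \bm{A}_\mathrm{res})$.

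Next I would apply the Yu--Wang--Samworth variant of Davis--Kahan to these two top-$C$ eigenspaces, giving
\[
    \norm{\sin\Theta\bigl(\mcV(\bm{P}_\mathrm{mar}),\, \mcV(\bm{I}-\bm{A}_\mathrm{res})\bigr)}_\mathrm{F}
    \;\leq\; \frac{2\norm{\bm{E}}_\mathrm{F}}{\lambda_\mathrm{min}}.
\]
Using the identity for principal angles quoted in the excerpt with $\bm{P}_1 = \bm{P}_\mathrm{mar}$ and $\bm{P}_2 = \bm{I} - \bm{A}_\mathrm{res}$, the left-hand side is precisely $\norm{\bm{P}_\mathrm{mar}\bm{A}_\mathrm{res}}_\mathrm{F}$. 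A crude elementwise bound $\norm{\bm{E}}_\mathrm{F} = \bigl(\sum_i (\sigma_i^2 - \sigma_\mathrm{min}^2)^2\bigr)^{1/2} \leq \sqrt{S}\,(\max_i \sigma_i^2 - \min_i \sigma_i^2)$ then yields the claimed inequality \eqref{eq:anisotropic_result}.

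The main obstacle is bookkeeping rather than any deep step: one must pick the shift carefully so that the eigenvalue gap of $\bm{M}_0$ is preserved at exactly $\lambda_\mathrm{min}$, and match dimensions so that Davis--Kahan applies (both eigenspaces have dimension $C$, which is ensured by $\bm{B}$ having full column rank so that $\lambda_\mathrm{min} > 0$). Choosing a shift other than $\sigma_\mathrm{min}^2$ (for example the midpoint $(\sigma_\mathrm{max}^2+\sigma_\mathrm{min}^2)/2$) yields a valid bound but a worse constant or a smaller effective gap; the choice $\sigma_\mathrm{min}^2$ simultaneously makes $\bm{E}$ positive semidefinite and keeps the gap equal to $\lambda_\mathrm{min}$, producing the clean factor in the statement.
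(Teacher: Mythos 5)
Your proposal is correct and follows essentially the same route as the paper: decompose $\bm{\Sigma}$ as a rank-$C$ part plus an isotropic shift plus a perturbation, apply the Yu--Wang--Samworth form of the Davis--Kahan $\sin\Theta$ theorem with the gap $\lambda_\mathrm{min}$ of the unperturbed matrix, identify $\norm{\sin\Theta}_\mathrm{F}$ with $\norm{\bm{P}_\mathrm{mar}\bm{A}_\mathrm{res}}_\mathrm{F}$, and bound the perturbation by $\sqrt{S}(\max_i\sigma_i^2-\min_i\sigma_i^2)$. The only difference is the shift: you use $\sigma_\mathrm{min}^2\bm{I}$ while the paper uses the mean $\bar{s}\bm{I}$, which minimizes $\norm{\bm{Q}-s\bm{I}}_\mathrm{F}$ and hence gives the sharper intermediate bound $2\Delta_\mathrm{iso}/\lambda_\mathrm{min}$ before weakening to the stated inequality (so your closing aside that shifts other than $\sigma_\mathrm{min}^2$ necessarily yield a worse constant or smaller gap is inaccurate, though it does not affect the validity of your argument).
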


Here, we give an interpretation of \eqref{eq:anisotropic_result}. If $\norm{\bm{P}_\mathrm{mar}\bm{A}_\mathrm{res}}_\mathrm{F}$ is shown to be small, the spectral residual $\bm{A}_\mathrm{res} \mathrm{r}$ is nearly orthogonal to any market factors. The right-hand side of \eqref{eq:anisotropic_result} can be small when either of the following conditions are satisfied: (i) If the residual factors are nearly isotropic, then $\max_{i} \sigma_i^2 - \min_i \sigma_i^2$ is small. (ii) If the smallest variance of the market factor $\lambda_\mathrm{min}$ is much larger than $\sqrt{S} (\max_{i} \sigma_i^2 - \min_i \sigma_i^2)$, the right-hand side becomes small.

\begin{proof}
Since $\bm{f}$ and $\bm{\epsilon}$ are uncorrelated, the covariance matrix of the generative model $\bm{r} = \bm{B} \bm{f} + \bm{\epsilon}$ can be written as $\bm{\Sigma} = \bm{B}\bm{B}^\top + \bm{Q}$, where $\bm{Q} = \diag(\sigma_1^2, \ldots, \sigma_S^2)$ is the variance of the residual factors.

Let $\bar{s} = \frac{1}{S} \sum_{i=1}^S \sigma_i^2$ be the averaged variance of the residual factors. As such, $\bar{s} \bm{I}$ is the closest isotropic covariance matrix of $\bm{Q}$ in the following sense:
\[
    \bar{s} \bm{I} \in \argmin_{s \bm{I}: s \geq 0} \norm{\bm{Q} - s \bm{I}}_{\mathrm{F}}.
\]
Let $\widehat{\bm{Q}} = \bar{s} \bm{I} - \bm{Q}$. Define $ \Delta_\mathrm{iso} \geq 0$ as
\[
    \Delta_\mathrm{iso} := \norm{\widehat{\bm{Q}}}_\mathrm{F}
    = \min_{s \bm{I}: s \geq 0} \norm{\bm{Q} - s \bm{I}}_{\mathrm{F}}
    = \left(
        \sum_{i=1}^S (\sigma_i^2 - \bar{s})^2
    \right)^{1/2},
\]
which quantifies how far $\bm{Q}$ is away from being isotropic.

As in the proof of Proposition \ref{prop:uncorrelated_appendix}, let $\bm{B} \bm{B}^\top = \bm{V}_\mathrm{mar} \bm{\Lambda} \bm{V}_\mathrm{mar}^\top$ be the eigendecomposition of $\bm{B} \bm{B}^\top$.
Define $\widehat{\bm{\Sigma}}$ as
\[
    \widehat{\bm{\Sigma}} = \bm{\Sigma} + \widehat{\bm{Q}}
    = \bm{B}\bm{B}^\top + \bar{s} \bm{I}.
\]
Since $\bm{B}\bm{B}^\top$ and $\bar{s} \bm{I}$ are simultaneously diagonalizable by $\bm{V}_\mathrm{mar}$, we can write
\[
    \widehat{\bm{\Sigma}}
    = \bm{V}_\mathrm{mar}(\bm{\Lambda} + \bar{s} \bm{I}) \bm{V}_\mathrm{mar}^\top.
\]
In particular, the eigenvalues of $\widehat{\bm{\Sigma}}$ are given as
\[
    \underbrace{\lambda_1 + \bar{s}, \ldots, \lambda_C + \bar{s}}_{C},
    \underbrace{\bar{s}, \ldots, \bar{s}}_{S-C}.
\]
Let $\bm{P}_\mathrm{mar}$ be the orthogonal projection matrix onto the linear subspace spanned by the column vectors of $\bm{B}$ (i.e., the market factors). Note that such a linear subspace is spanned by $\bm{v}_1, \ldots, \bm{v}_C$, and $\bm{P}_\mathrm{mar}$ is computed as
\[
    \bm{P}_\mathrm{mar}
    = [\bm{v}_1, \ldots, \bm{v}_C] [\bm{v}_1, \ldots, \bm{v}_C]^\top.
\]

Also, let $\bm{\Sigma} = \bm{U} \bm{M} \bm{U}^\top$ be the eigendecomposition of $\bm{\Sigma}$, where $\bm{U} = [\bm{u}_1, \ldots, \bm{u}_S]$ is an orthogonal matrix, and $\bm{M} = \diag(\mu_1, \ldots, \mu_S)$ is a diagonal matrix with $\mu_1 \geq \cdots \geq \mu_C > \mu_{C  + 1} \geq \cdots \mu_S > 0$
\footnote{
    This is true for many cases where the market factors are much larger than the residual factors. For example, let $\delta = \norm{\hat{\bm{Q}}}_\mathrm{op} = \max_j |\sigma_j^2 - \bar{s}|$, and suppose that $\lambda_C = \min_{i} \lambda_i > 2\delta$. Then, from the well-known Weyl's inequality on eigenvalues, we have $\mu_C \geq \lambda_C + \bar{s} - \delta > \bar{s} + \delta \geq \mu_{C + 1}$.
}.
Then, the projection matrix for the spectral residual is given as
\[
    \bm{A}_\mathrm{res}
    = [\bm{u}_{C + 1}, \ldots, \bm{u}_{S}] [\bm{u}_{C + 1}, \ldots, \bm{u}_{S}]^\top.
\]

Now, $\bm{P}_\mathrm{mar}$ and $\bm{I} - \bm{A}_\mathrm{res}$ are the projection matrices that correspond to eigenvectors with the largest $C$ eigenvalues of $\widehat{\bm{\Sigma}}$ and $\bm{\Sigma}$, respectively. From Davis--Kahan $\sin \theta$ theorem \cite{davis1970, Yu2014daviskahan}, we conclude
\[
    \norm{\bm{P}_\mathrm{mar} \bm{A}_\mathrm{res}}_\mathrm{F}
    \leq \frac{2 \norm{\widehat{\bm{\Sigma}} - \bm{\Sigma}}_\mathrm{F}}{\min_{1 \leq i \leq C} \lambda_i}
    = \frac{2 \Delta_\mathrm{iso}}{\min_{1 \leq i \leq C} \lambda_i}.
\]
We also have a weaker inequality \eqref{eq:anisotropic_result} since $\Delta_\mathrm{iso} \leq \sqrt{S}(\max_i \sigma_i^2 - \min_{i} \sigma_i^2)$.

\end{proof}

\section{Details for experimental settings}\label{sec:details_experiments}

In this section, we provide some more details on the experiments in Section 4.

\subsection{Definitions of additional evaluation metrics}

\begin{itemize}
    \item \textit{Maximum DrawDown (MDD)} is the maximum loss from a peak to a trough \cite{grossman1993optimal}, which can measure one aspect of downside risk:
    \begin{equation}
    \mathrm{MDD}_T =
        \max_{t \in \{1,\ldots,T\} }
        \max_{s \in \{1,\ldots,t\} }
        \left(
        \frac{\mathrm{CW}_t - \mathrm{CW}_s}{\mathrm{CW}_t}
        \right).
    \end{equation}
    
    \item \textit{Calmar Ratio (CR)} is a risk-adjusted return based on the maximum drawdown \cite{young1991calmar}:
    $\mathrm{CR}_T := \mathrm{AR}_T / \mathrm{MDD}_T$.
    
    \item \textit{Downside Deviation Ratio (DDR)} (a.k.a.~Sortino Ratio) is a variation of the Sharpe ratio \cite{sortino1994performance}. While the Sharpe ratio regards overall volatility as risk, it regards only volatility caused by negative returns as harmful risk:
    \begin{equation}
        \mathrm{DDR}_T
        := \frac{\mathrm{AR}_T}{\sqrt{
        \frac{T_\mathrm{Y}}{T} \sum^{T}_{i=1} \min(0, R_t)^2
        }}.
    \end{equation}
\end{itemize}

\subsection{Definitions of some baseline methods}

\begin{itemize}
    \item \textbf{Reversal strategy}. In Section 4.2 and Section 4.3, we used a simple reversal strategy (\texttt{AR(1)}) as a benchmark method, which is defined as follows. Let $\bm{r}_t$ ($t = 1, 2, \ldots$) be either a return sequence or a transformed return sequence (e.g., the spectral residuals). Then, the reversal strategy $\bm{b}_t$ is defined by renormalizing $- \bm{r}_{t-1}$ to be a zero-investment portfolio, that is,
    \[
        \bm{b}_t = \frac{\bm{r}_{t-1} - \bar{r}_{t-1} \bm{1}}{\norm{\bm{r}_{t-1} - \bar{r}_{t-1} \bm{1}}_1},
    \]
    where $\bar{r}_{t-1} = \frac{1}{S}\sum_{i=1}^S r_{t-1}$.
    \item \textbf{MLP-based prediction}. In Section 4.2, we also used a neural network-based prediction of returns (\texttt{MLP}). We used a fully-connected neural network with $4$ hidden layers. Each hidden layer has 512 nodes, 50\% dropout and batch normalization.
\end{itemize}

\subsection{Details of network architectures and hyperparameters}

Here, we explain the details of the architecture that we used for the distributional prediction. For each coordinate $i \in \set{1, \ldots, S}$ of the spectral residual, we applied a common non-linear function $\psi: \RR^{H} \to \RR^{Q-1}$ that predicts $Q - 1$ quantiles based on the past $H$ observations. We designed $\psi$ by the fractal network introduced in Section 3.3. The detailed specifications are as follows.

\begin{itemize}
    \item We estimate $Q - 1 = 31$ quantiles for each stock. In particular, $j$-th coordinate of $\psi$ corresponds to the $\frac{j}{32}$-th quantile of the future distribution.
    \item The resampling mechanism $\mathrm{Resample}(\bm{x}, \tau)$ outputs sequences of a fixed length $H' = 64$. For the scale parameters, we used $\tau_j := 4^{-j/20}$ with $j \in \set{0, \ldots, 21}$.
    \item For the function $\psi_1: \RR^{H'} \to \RR^{K}$ (with $K - 256$), we used a fully connected neural network with 3 hidden layers. Each layer has 256 nodes, 50\% dropout and batch normalization.
    \item For the function $\psi_2: \RR^{K} \to \RR^{Q-1}$, we used a fully connected neural network with 8 hidden layers. Each layer has 128 nodes, 50\% dropout and batch normalization.
    \item For training, we used the Adam optimizer \cite{Kingma2015Adam} with learning rate $0.001$.
\end{itemize}

\section{Supplementary experiments for spectral residuals}

In this section, we provide additional experimental results on the spectral residual. In Section \ref{sec:ex-local-stability}, we conduct a simple experiment to show that the short-term behavior of the (empirical) spectral residual is reasonably stable. In Section \ref{sec:ex-choosing-c}, we discuss a simple way to determine the parameter $C$ from the training data.

\subsection{Local stability of spectral residuals}\label{sec:ex-local-stability}

We check whether the spectral residuals are locally stable. This is not trivial because (i) the spectral residual is calculated locally on time windows, and (ii) the long-term behavior of the financial time series is highly non-stationary. To this end, we here investigate the ability of the spectral residuals to reduce the volatility of sequences.

For each time $t$, we calculated the projection matrix ($\bm{A}_t$ defined in Section 3.1) and the spectral residuals $\tilde{\bm{\epsilon}}_s$ for $t - H \leq s \leq t - 1$. We also generated the spectral residuals for unseen duration $t \leq s < t + H$ by fixing $\bm{A}_t$ and extrapolating the relation $\tilde{\bm{\epsilon}}_s = \bm{A}_t \bm{r}_s$. Then, we calculated the volatility $\mathrm{Vol}(\tilde{\bm{\epsilon}}_s)$ of the spectral residual at each $s$. Throughout, we fixed $H = 256$, and we varied the number of principal components to be removed as $C \in \{ 0, 1, 10, 20, 50, 100 \}$.

\begin{figure}[t]
  \centering
    \begin{tikzpicture}
        \node[inner sep=0] (I) at (0, 0)
        {\includegraphics[width=0.7\linewidth]{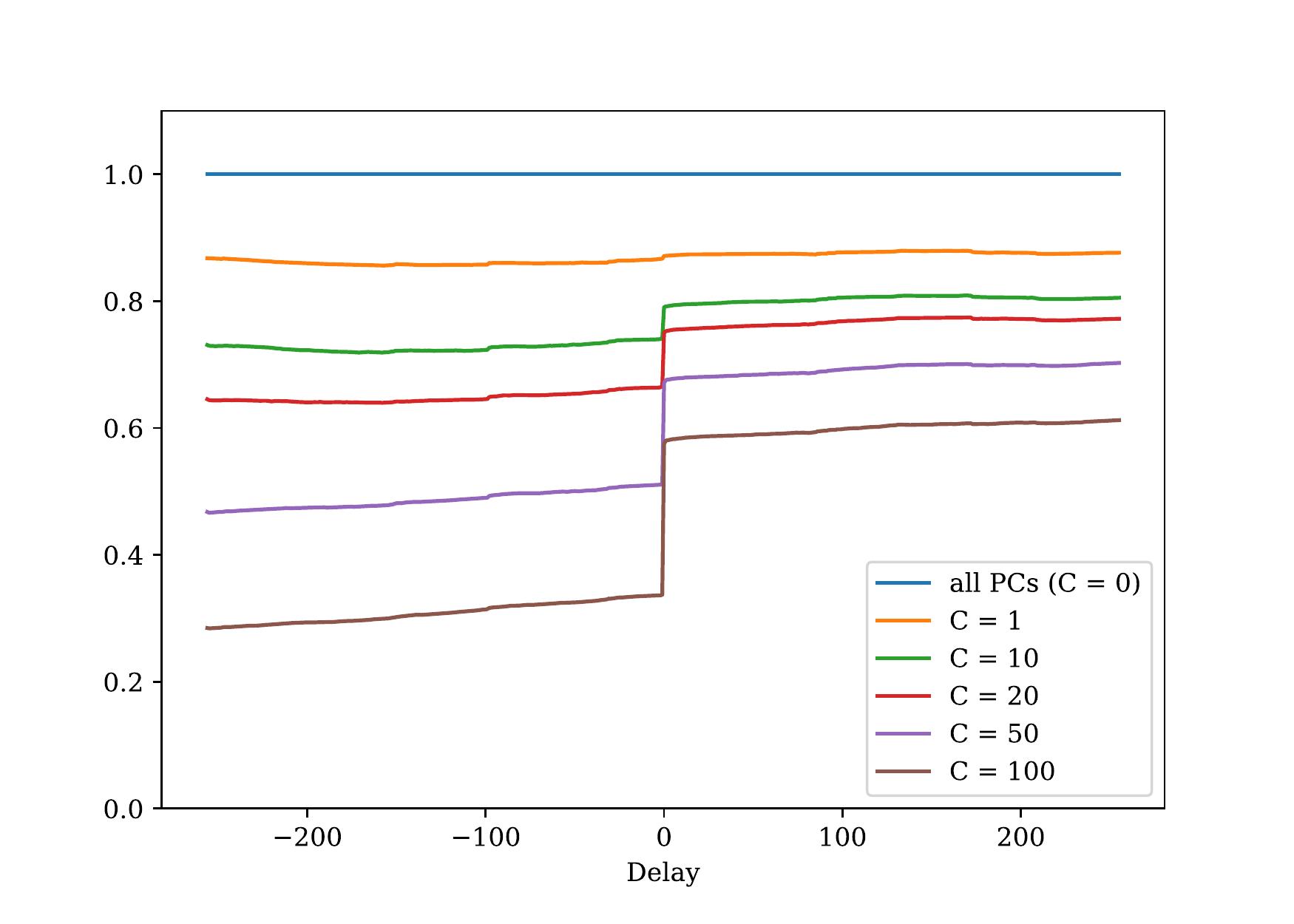}};
        \node[left=0cm of I] {\rotatebox{90}{Relative volatility}};
        \node[below=0cm of I] {Delay $\Delta$};
    \end{tikzpicture}

    \vspace{-10pt}

    \caption{
    Relative volatility to the raw stock returns for various choices of parameter $C$. The ability to reduce the volatility seems to continue in the unseen duration ($\Delta > 0$), which may suggest the local stability of spectral residuals.  See the text for details.
    }
    \label{fig-pca-stationariness}
\end{figure}

Figure \ref{fig-pca-stationariness} shows the result averaged over $t$, which illustrates the proportion of the volatility of spectral residuals ($C \geq 1$) to the volatility of raw stock returns ($C = 0$). The horizontal axis is the delay parameter $\Delta := s - t \in \{ -H, \ldots, 0, \ldots, H - 1 \}$. Here, $\Delta < 0$ corresponds to the ``observed'' duration used for calculating the projection matrix $\bm{A}_t$, and $\Delta \geq 0$ corresponds to the ``unseen'' duration generated by extrapolation. From this, we observe the followings:
\begin{itemize}
    \item \textbf{Monotonicity.} Increasing the number $C$ of eliminated principal components can reduce the volatility of the spectral residuals even for the unseen duration.
    \item \textbf{Local stability.} Regarding the volatility proportion, the difference between the observed duration and the unseen duration increases with increasing $C$. Remarkably, the first principal component (a.k.a. the market factor) is quite stable, and the volatility proportion for $C = 1$ is well extrapolated to the unseen duration.
\end{itemize}
The above observations suggest that the projection matrix $\bm{A}_t$ can be locally stable if $C$ is not too large. Hence, we can extract meaningful residual information in the subsequent time window by extrapolating the projection matrix.

\subsection{Choosing the number of eliminated factors}\label{sec:ex-choosing-c}

\begin{table}[t]
    \centering
    \caption{Performance comparison of reversal returns on different numbers of principal components (PCs) to be eliminated (U.S.~market).}
    \scalebox{0.85}{
    \begin{tabular}{@{}lcccccc@{}}
        \toprule
        
         & ASR$\uparrow$ & AR$\uparrow$ & AVOL$\downarrow$ & DDR$\uparrow$ & CR$\uparrow$ & MDD$\downarrow$ \\ \midrule
         
        \textbf{$C = 0$} & +0.759 & \textbf{+0.076} & 0.058 & +1.325 & +0.367 & 0.158 \\
        \textbf{$C = 1$} & +0.753 & +0.054 & 0.041 & +1.343 & +0.309 & 0.132 \\
        \textbf{$C = 10$} & \textbf{+1.426} & +0.035 & 0.049 & \textbf{+2.541} & \textbf{+1.206} & 0.041 \\
        \textbf{$C = 20$} & +1.317 & +0.028 & 0.037 & +2.288 & +1.000 & 0.037 \\
        \textbf{$C = 50$} & +1.264 & +0.019 & 0.024 & +2.275 & +0.990 & \textbf{0.025} \\
        \textbf{$C = 100$} & +1.089 & +0.013 & \textbf{0.014} & +1.877 & +0.391 & 0.035 \\
        
        \bottomrule
    \end{tabular}
    }
    \label{table-reversal-returns-with-pca}
\end{table}
\begin{table}[t]
    \centering
    \caption{Performance comparison of reversal returns on different numbers of principal components (PCs) to be eliminated (Japanese market).}
    \scalebox{0.85}{
    \begin{tabular}{@{}lcccccc@{}}
        \toprule
         & ASR$\uparrow$ & AR$\uparrow$ & AVOL$\downarrow$ & DDR$\uparrow$ & CR$\uparrow$ & MDD$\downarrow$ \\ \midrule
        
        $C = 0$ & +1.259 & +0.082 & 0.065 & +2.344 & +0.700 & 0.117 \\
        $C = 1$ & +1.715 & \textbf{+0.087} & 0.050 & +3.169 & +1.103 & 0.078 \\
        $C = 10$ & +2.960 & +0.074 & 0.025 & +5.555 & +2.037 & 0.036 \\
        $C = 20$ & \textbf{+3.291} & +0.071 & 0.022 & \textbf{+6.306} & \textbf{+4.215} & 0.017 \\
        $C = 50$ & +3.036 & +0.047 & 0.016 & +5.938 & +3.258 & \textbf{0.015} \\
        $C = 100$ & +2.175 & +0.023 & \textbf{0.011} & +4.114 & +1.152 & 0.020 \\
        
        \bottomrule
    \end{tabular}
    }
    \label{table-reversal-returns-with-pca-jp}
\end{table}

Next, we consider the appropriate choice of the parameter $C$ to make trading strategies robustly profitable. Figure \ref{fig-reversal-returns} and Table \ref{table-reversal-returns-with-pca} show the performance of the reversal strategy over the spectral residuals on U.S.~market.

From the result, we observe the followings.
With increasing the number $C$ of eliminated principal components, both AR and AVOL tend to decrease. This is because eliminating more principal components reduces the volatility, while it becomes more difficult to earn large returns from the remaining information.
Especially, eliminating the first $10$ principal components ($C=10$) attained the best trade-off between the return and the risk, and thus had the best ASR value.

We also conducted a similar experiment on Japanese market data, in which $C = 20$ achieved the best ASR value. See Figure \ref{fig-reversal-returns-jp} and Table \ref{table-reversal-returns-with-pca-jp} for corresponding results.

\begin{figure*}
    \begin{minipage}[t]{.48\textwidth}
        \centering
        \includegraphics[width=0.9\textwidth]{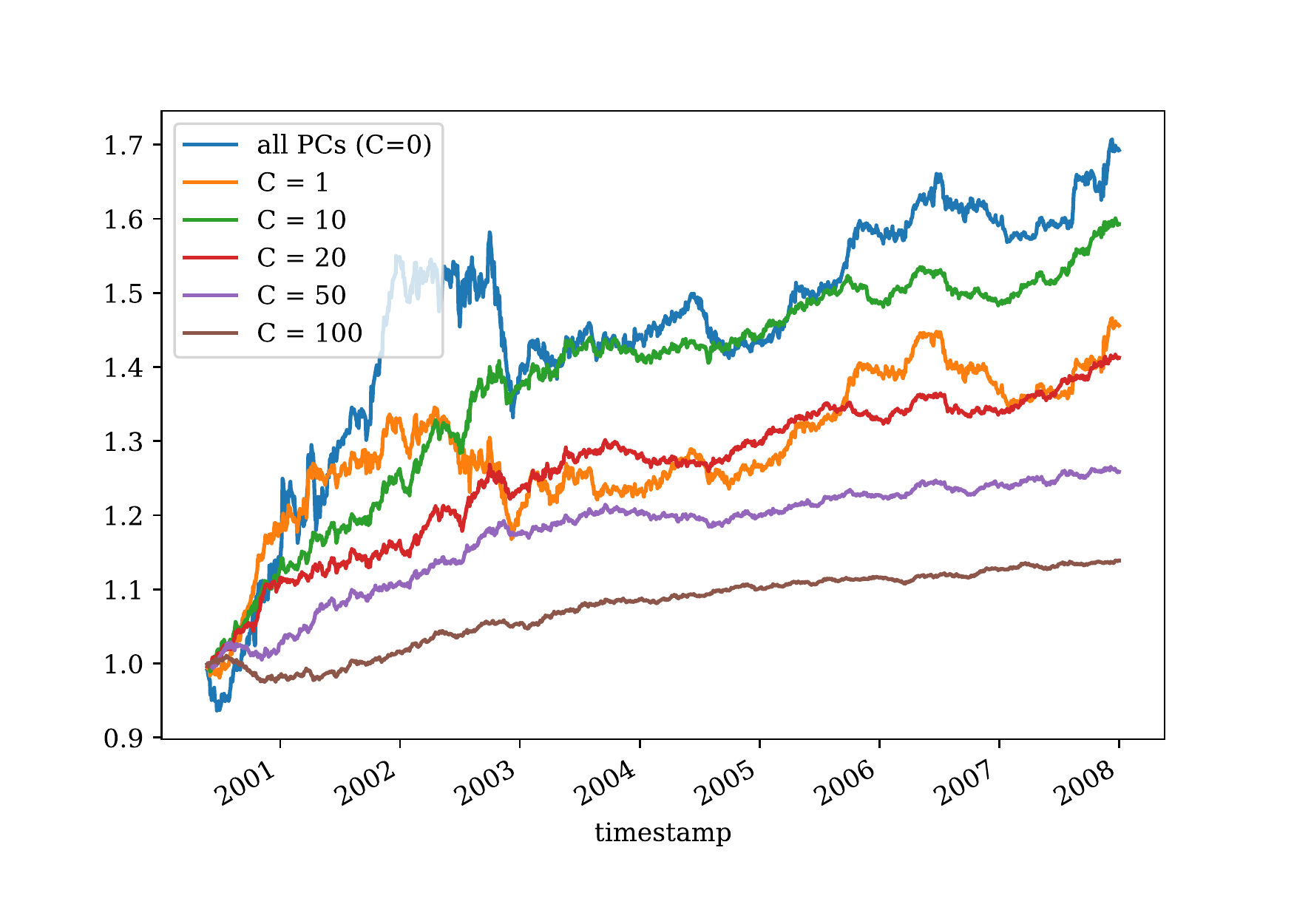}
        \caption{The Cumulative Wealth of the reversal strategy with different choices of the number $C$ of principal components to be eliminated (U.S.~market).}
        \label{fig-reversal-returns}
    \end{minipage}
    \hfill
    \begin{minipage}[t]{.48\textwidth}
        \centering
         \includegraphics[width=0.9\textwidth]{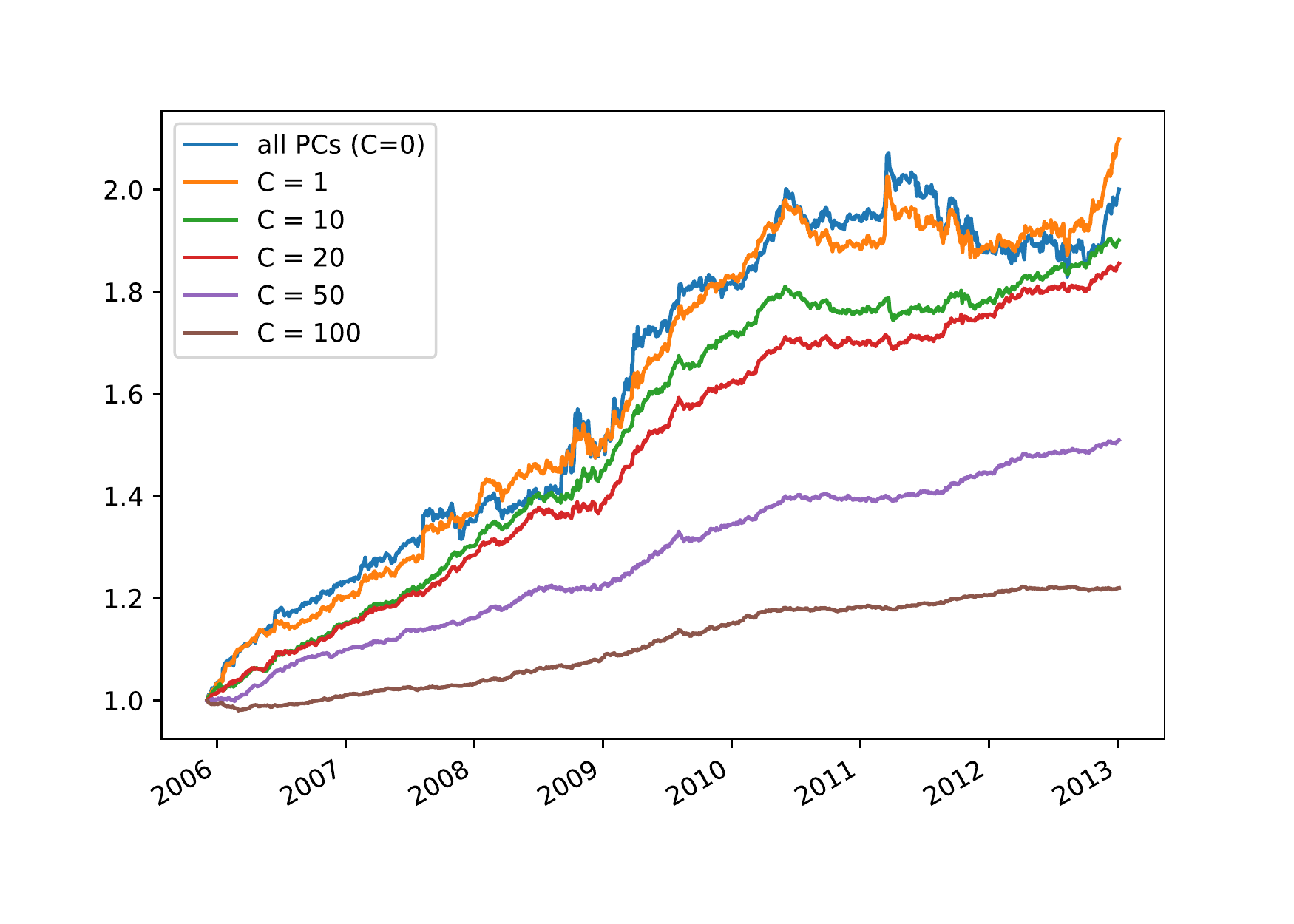}
        \caption{The Cumulative Wealth of the reversal strategy with different choices of the number $C$ of principal components to be eliminated (Japanese market).}
        \label{fig-reversal-returns-jp}
    \end{minipage}
\end{figure*}

\section{Performance evaluation on Japanese market data}\label{sec:japanese}

Here, we evaluated the performance of our proposed system (DPO) on Japanese market data. We used similar evaluation metrics and baseline methods presented in Section 4.3.

\subsection{Japanese market data}

For Japanese market data, we used daily prices of stocks listed in TOPIX 500 from January 2005 to December 2018. We used data before January 2012 for training and validation and the remainder for testing. We obtained the data from Japan Exchange Group (JPX)\footnote{\url{https://www.jpx.co.jp/english/markets/}}.

\subsection{Results}

\begin{table}[t]
    \centering
    \caption{Performance comparison on the Japanese market.}
    \label{table-final-jp}
    {\scriptsize
    \begin{tabular}{@{}lccccccc@{}}
        \toprule
        & ASR$\uparrow$ & AR$\uparrow$ & AVOL$\downarrow$ & DDR$\uparrow$ & CR$\uparrow$ & MDD$\downarrow$ \\ \midrule
        
        \textbf{Market} & +0.819 & \textbf{+0.158} & 0.193 & +1.320 & +0.636 & 0.248 \\
        \textbf{AR(1)} & +1.511 & +0.058 & 0.038 & +2.684 & +1.094 & 0.053 \\ \midrule
        \textbf{AR(1) on SRes} & +1.835 & +0.034 & 0.019 & +3.237 & +1.112 & 0.031 \\
        \textbf{Linear on SRes} & +1.380 & +0.023 & \textbf{0.017} & +2.469 & +0.952 & \textbf{0.024} \\
        \textbf{MLP on SRes} & +1.802 & +0.030 & \textbf{0.017} & +3.280 & +0.963 & 0.032 \\
        \textbf{SFM on SRes} & +0.250 & +0.005 & 0.019 & +0.442 & +0.116 & 0.042 \\
        \midrule
        \textbf{DPO-NQ} & +1.369 & +0.024 & 0.018 & +2.379 & +0.758 & 0.032 \\
        \textbf{DPO-NF} & +1.770 & +0.030 & \textbf{0.017} & +3.159 & +1.165 & 0.025 \\
        \textbf{DPO-NV} & +1.979 & +0.039 & 0.020 & +3.516 & \textbf{+1.484} & 0.026 \\
        \textbf{DPO} & \textbf{+2.171} & +0.036 & \textbf{0.017} & \textbf{+3.878} & +1.460 & 0.025 \\
        
        \bottomrule
    \end{tabular}
    }
\end{table}


\begin{figure}[t]
    \begin{minipage}[t]{.5\textwidth}
        \centering
        \begin{tikzpicture}
            \node[inner sep=0] (I) at (0, 0)
            {\includegraphics[width=0.90\textwidth]{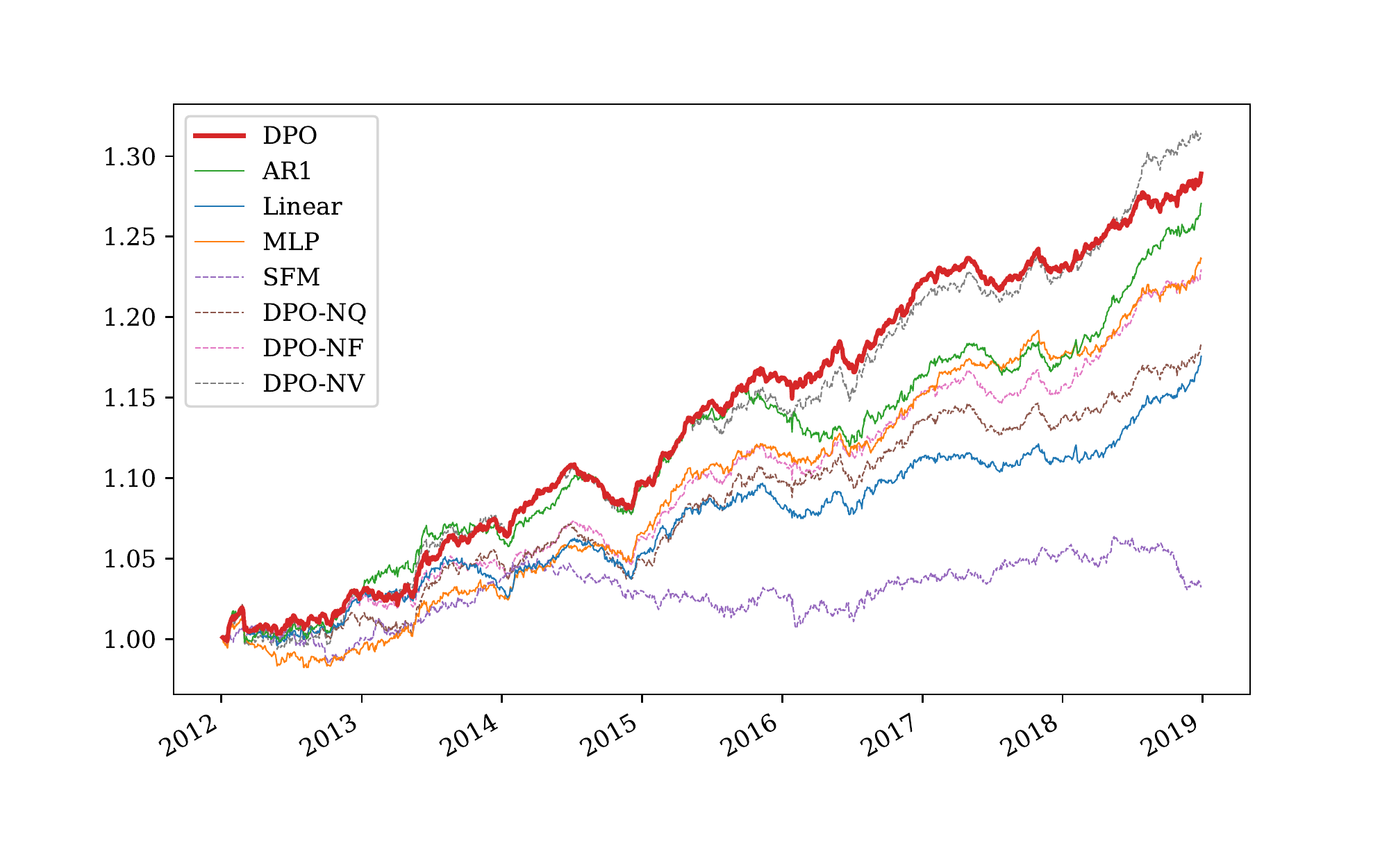}};
            \node[left=0cm of I] {\rotatebox{90}{Cumulative wealth}};
            \node[below=0cm of I] {Date};
        \end{tikzpicture}
        \caption{The Cumulative Wealth in Japanese market.}
        \label{fig-final-ablation-jp}
    \end{minipage}
\end{figure}

Table \ref{table-final-jp} and Figure \ref{fig-final-ablation-jp} show the results. Overall, we obtained consistent results with those on U.S.~market. Our proposed method (DPO) outperformed other baseline methods and ablation models in the ASR. For other evaluation metrics, DPO achieved comparable performance to the best performing baselines.

\end{document}